\def\BibTeX{{\rm B\kern-.05em{\sc i\kern-.025em b}\kern-.08em
    T\kern-.1667em\lower.7ex\hbox{E}\kern-.125emX}}
\newtheorem{definition}{Definition}
\newtheorem{exmp}{Example}[section]
\newtheorem{theorem}{Theorem}
\newtheorem{property}{Property}
\newtheorem{corollary}{Corollary}
\newtheorem{lemma}{Lemma}
\begin{document}

\title{Bayesian Advantage of Re-Identification Attack in the Shuffle Model}

\author{\IEEEauthorblockN{Pengcheng Su}
\IEEEauthorblockA{School of Computer Science\\
Peking University\\ pcs@pku.edu.cn}
\and
\IEEEauthorblockN{Haibo Cheng\quad Ping Wang}
\IEEEauthorblockA{National Engineering Research Center of Software Engineering\\
Peking University\\ hbcheng@pku.edu.cn, pwang@pku.edu.cn}
}

\maketitle

\begin{abstract}
The shuffle model, which anonymizes data by randomly permuting user messages, has been widely adopted in both cryptography and differential privacy. In this work, we present the first systematic study of the Bayesian advantage in re-identifying a user's message under the shuffle model. We begin with a basic setting: one sample is drawn from a distribution $P$, and $n - 1$ samples are drawn from a distribution $Q$, after which all $n$ samples are randomly shuffled. We define $\beta_n(P, Q)$ as the success probability of a Bayes-optimal adversary in identifying the sample from $P$, and define the additive and multiplicative Bayesian advantages as $\mathsf{Adv}_n^{+}(P, Q) = \beta_n(P,Q) - \frac{1}{n}$ and $\mathsf{Adv}_n^{\times}(P, Q) = n \cdot \beta_n(P,Q)$, respectively.

We derive exact analytical expressions and asymptotic characterizations of $\beta_n(P, Q)$, along with evaluations in several representative scenarios. Furthermore, we establish (nearly) tight mutual bounds between the additive Bayesian advantage and the total variation distance.

Finally, we extend our analysis beyond the basic setting and present, for the first time, an upper bound on the success probability of Bayesian attacks in shuffle differential privacy. Specifically, when the outputs of $n$ users—each processed through an $\varepsilon$-differentially private local randomizer—are shuffled, the probability that an attacker successfully re-identifies any target user's message is at most $e^{\varepsilon}/n$.
\end{abstract}

\begin{IEEEkeywords}
shuffle model, re-identification attack, Bayesian advantage, differential privacy
\end{IEEEkeywords}

\section{Introduction}

The shuffle model involves a trusted shuffler that receives messages from users, randomly permutes them, and then publishes the shuffled outputs in order to achieve anonymization \cite{Prochlo,cheu19}. This model has demonstrated strong capabilities in both cryptography and differential privacy: it enables information-theoretically secure key exchange (which can be further applied to advanced tasks such as secure multiparty computation) \cite{Ishai06,Beimei20}, and significantly amplifies the privacy guarantees of local differential privacy (LDP) mechanisms \cite{Erlingsson19,cheu19,Balle2019,Feldman2021}.

In this paper, we initiate the first systematic study of a fundamental question in the shuffle model: suppose $n$ users each independently generate a message $y_i$ according to a probability distribution $P_i$ for $i = 1, 2, \dots, n$. After applying the shuffler, the adversary observes the shuffled output $\boldsymbol{z} = \mathcal{S}(y_1, y_2, \dots, y_n) = (y_{\sigma(1)}, y_{\sigma(2)}, \dots, y_{\sigma(n)})$, where $\sigma$ is a random permutation\footnote{In this notation, $\sigma(i) = j$ means that the output originally from the $j$-th user is now placed in the $i$-th position.
 For example, if $\boldsymbol{z} = (y_2, y_4, \dots)$, then $\sigma(1) = 2, \ \sigma(2) = 4, \ \dots$
}. We ask: what is the probability that an adversary can correctly identify the message sent by a target user $t$—that is, recover the index $\sigma^{-1}(t)$?

The success probability of such an attack depends on the adversary’s prior knowledge. We assume a strong adversary who knows all distributions $P_i$. Under this assumption, the theoretically optimal strategy is to compute the posterior distribution via Bayes’ rule and select the most likely candidate \cite{1}.

We illustrate the notion of the re-identification attack in the shuffle model with a simple example.
Consider an anonymous voting scenario: voter~1's vote is known a priori to be either candidate~A or candidate~B, whereas every other voter is known {\em not} to vote for A or B.
Even after the ballots are randomly permuted by the shuffler, an adversary can immediately locate voter~1's ballot in the shuffled ballots with probability 1, because its observable value is unique.
This example is trivial in the sense that the support of voter~1's output distribution is disjoint from the supports of the other voters' distributions.

More generally, if the other voters' output distributions have nonempty overlap with voter~1's distribution, the observations become ambiguous and the adversary faces nontrivial uncertainty.
Analyzing this nontrivial regime is the main objective of this paper.

We begin by analyzing a basic setting in which the input distributions are specified as $P_1 = P$ and $P_2 = P_3 = \dots = P_n = Q$. This captures the scenario where a single sample drawn from $P$ is hidden among $(n - 1)$ independently drawn samples from $Q$, and the adversary's objective is to identify the one originating from $P$.

This setting is not only analytically tractable but also practically motivated: it forms the basis of so-called \emph{honey techniques}, with the \emph{Honeyword} system being a representative example (see Section~2 for details)~\cite{1,9,10}. In contrast, the general case—where the distributions $P_2, \dots, P_n$ are not necessarily identical—is typically analytically intractable \cite{10221886}. Fortunately, we show that it can be reduced to this basic setting without loss of generality.

\begin{figure}[t]
\centering
\resizebox{\linewidth}{!}{%
\begin{tikzpicture}[
  font=\large, 
  user/.style={rectangle, draw, rounded corners, minimum width=1.5cm, minimum height=0.6cm},
  rnd/.style={ellipse, draw, minimum width=1.2cm, minimum height=0.6cm},
  shuffler/.style={rectangle, draw, fill=gray!10, minimum width=2.5cm, minimum height=1cm},
  msg/.style={rectangle, draw, fill=blue!10, minimum width=0.8cm, minimum height=0.5cm},
  adv/.style={rectangle, draw, thick, fill=red!10, minimum width=1.6cm, minimum height=0.8cm}
]

\node[user] (U1) at (0, 0) {User 1};
\node[user] (U2) [below=0.6cm of U1] {User 2};
\node[user] (U3) [below=0.5cm of U2] {$\vdots$};
\node[user] (Un) [below=0.6cm of U3] {User $n$};

\node[rnd] (R1) [right=1.5cm of U1] {$P_1$};
\node[rnd] (R2) [right=1.5cm of U2] {$P_2$};
\node[rnd] (Rn) [right=1.5cm of Un] {$P_n$};

\draw[->] (U1) -- (R1);
\draw[->] (U2) -- (R2);
\draw[->] (Un) -- (Rn);

\node[msg] (Y1) [right=1.2cm of R1] {$y_1$};
\node[msg] (Y2) [right=1.2cm of R2] {$y_2$};
\node[msg] (Yn) [right=1.2cm of Rn] {$y_n$};

\draw[->] (R1) -- (Y1);
\draw[->] (R2) -- (Y2);
\draw[->] (Rn) -- (Yn);

\node[shuffler] (S) [right=1.8cm of Y2, yshift=-0.4cm] {Shuffler $\mathcal{S}$};

\draw[->] (Y1) -- ([yshift=0.2cm]S.west);
\draw[->] (Y2) -- (S.west);
\draw[->] (Yn) -- ([yshift=-0.2cm]S.west);

\node[msg] (Z1) [right=2.5cm of S.north east, yshift=1.0cm] {$z_1 = y_{\sigma(1)}$};
\node[msg] (Z2) [below=0.5cm of Z1] {$z_2 = y_{\sigma(2)}$};
\node[msg] (Z3) [below=0.5cm of Z2] {$\vdots$};
\node[msg] (Zn) [below=0.5cm of Z3] {$z_n = y_{\sigma(n)}$};

\draw[->] (S.east) -- (Z1.west);
\draw[->] (S.east) -- (Z2.west);
\draw[->] (S.east) -- (Zn.west);

\node[adv] (A) [right=1.5cm of Z2] {Adversary};

\draw[->] (Z1.east) -- (A.west);
\draw[->] (Z2.east) -- (A.west);
\draw[->] (Zn.east) -- (A.west);

\draw[->, thick, red] ([yshift=-1pt]A.south) -- ++(0,-0.6) node[below] {\footnotesize Output $\sigma^{-1}(t)$};

\draw[<- , thick, red] ([yshift=1pt]A.north) -- ++(0,0.6) node[above] {\footnotesize Input $t$};

\end{tikzpicture}
}
\caption{Re-identification attack in the shuffle model. Each user $i$ generates $y_i \sim P_i$, and the shuffler randomly permutes the outputs. The adversary observes $\boldsymbol{z} = (y_{\sigma(1)}, \dots, y_{\sigma(n)})$ and tries to infer the position of a target message.}
\label{fig:shuffle_attack}
\end{figure}
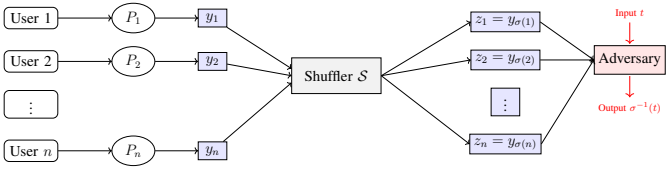

We define the success probability of the Bayes-optimal adversary in the basic setting as $\beta_n(P, Q)$. Since a trivial strategy that outputs a uniformly random index in $[n] = \{1, 2, \dots, n\}$ achieves a success probability of $1/n$, we define the \emph{Bayesian advantage} in both additive and multiplicative forms as follows:
\begin{align*}
\mathsf{Adv}_n^{+}(P, Q) &= \beta_n(P, Q) - \frac{1}{n},\\
\mathsf{Adv}_n^{\times}(P, Q) &= \frac{\beta_n(P, Q)}{1/n} = n \cdot \beta_n(P, Q).
\end{align*}

Leveraging the classical likelihood ratio method from information theory~\cite{Polyanskiy22}, we derive an exact analytical expression for $\beta_n(P, Q)$. This expression enables evaluation of $\beta_n(P, Q)$ in several representative cases and facilitates a precise characterization of its asymptotic behavior as $n$ increases:
\[
\beta_n(P, Q) = \alpha + \frac{M}{n} - \ell(n), \ \text{where } \ell(n) \ge 0 \text{ and } \ell(n) = o\left(\frac{1}{n}\right),
\]
with $M = \sup_{x: Q(x) \neq 0} \frac{P(x)}{Q(x)}$ and $\alpha = \sum_{x: Q(x) = 0} P(x)$.
This result reveals that when $P$ is absolutely continuous with respect to $Q$ (i.e., $\alpha = 0$), the success probability is tightly upper bounded by $\frac{M}{n}$.

Furthermore, we establish a connection between $\mathsf{Adv}_n^{+}(P, Q)$ and the classical total variation distance from information theory, defined as $\Delta(P, Q) = \frac{1}{2} \sum_{x} |P(x) - Q(x)|$. Specifically, we prove the following tight two-sided bound:
\[
\frac{\Delta(P, Q)}{n} \le \mathsf{Adv}_n^{+}(P, Q) \le \Delta(P, Q).
\]
This result implies that for any fixed $n$, the additive Bayesian advantage $\mathsf{Adv}_n^{+}(P, Q)$ is equivalent to $\Delta(P, Q)$ up to constant factors. In particular, as $\Delta(P, Q) \to 0$, the additive advantage $\mathsf{Adv}_n^{+}(P, Q) \to 0$.
We further demonstrate that both bounds are tight by providing explicit examples. The cryptographic implications of this connection are discussed in Section~\ref{sec:connections}.

Finally, we extend the analysis from the basic setting to the general setting. As a representative application, we consider the \emph{single-message shuffle model}, a widely adopted paradigm in differential privacy \cite{Erlingsson19,Balle2019,cheu19,Feldman2021}. In this model, each user holds an input $x_i \in \mathbb{X}$ and applies a local randomizer $\mathcal{R}$ to produce a message $\mathcal{R}(x_i)$ \footnote{{
It is worth noting that our analysis allows the inputs $x_i$ to be random variables on $\mathbb{X}$. A detailed discussion is provided in Section \ref{sec_definition_attack_dp}.
}}. These messages are then passed to a trusted shuffler, which applies a uniformly random permutation before forwarding the shuffled messages to the server.

Previous research has primarily focused on analyzing and computing the differential privacy guarantees in the shuffle model (see details in Section \ref{sec_background_dp}) \cite{Erlingsson19,Balle2019,cheu19,Feldman2021,feldman2023soda}. To the best of our knowledge, our work is the first to formally consider and theoretically analyze re-identification attacks in the Shuffle DP setting.

We prove that if the mechanism $\mathcal{R}$ satisfies $\varepsilon$-differential privacy—i.e., for all $x, x' \in \mathbb{X}$ and all outputs $y\in \mathbb{Y}$,
\[
\frac{\Pr[\mathcal{R}(x) = y]}{\Pr[\mathcal{R}(x') = y]} \le e^{\varepsilon},
\]
then the success probability of a Bayesian adversary in re-identifying a user's output is bounded by
\[
\beta_n(\mathcal{R}) \le \frac{e^{\varepsilon}}{n}.
\]
This result indicates that in the shuffle DP model, the adversary’s success probability in re-identifying any user's output is at most $e^{\varepsilon}$ times the ideal baseline of $1/n$.
In other words, the multiplicative Bayesian advantage $\mathsf{Adv}_n^{\times}(\mathcal{R}) \le e^{\varepsilon}$.
A more refined analysis shows that when user~1 is targeted, the tight asymptotic bound on $\mathsf{Adv}_n^{\times}(\mathcal{R},x_1)$ is given by
\[
\mathsf{Adv}_n^{\times}(\mathcal{R},x_1) \le M,
\]
where $M=\sup_{y} \frac{\Pr[\mathcal{R}(x_1) = y]}{\inf_{x} \Pr[\mathcal{R}(x) = y]}$.

Our main proof technique is based on a novel application of the two decomposition methods known as the \emph{clone} \cite{Feldman2021} and the \emph{blanket} \cite{Balle2019}. These decompositions were originally developed to analyze privacy-amplification effects in the shuffle model; the reader may consult \cite{su2025decompositionbasedoptimalboundsprivacy} for a survey of these methods. From a theoretical perspective, the reason decomposition techniques are well suited to analyzing both privacy amplification and re‑identification attacks is that \textit{both problems admit closed‑form solutions in the basic setting}. 
In particular, Lemma 5.3 in the blanket paper \cite{Balle2019} provides an analytic expression for the Hockey‑Stick divergence of the shuffled outputs when user~1’s input changes while all other users’ messages are sampled i.i.d.\ from the same distribution; our basic‑setting analysis for re‑identification (Theorem \ref{theorem2}) is analogous.

Intuitively, when the distributions of the other users are not identical, the decomposition approach discards the heterogeneous parts of those distributions and retains only the component that is common (i.e., the overlapping distribution). One can show that discarding the non‑common parts can only increase the adversary’s distinguishing power. Hence, applying these decompositions yields valid upper bounds on the shuffle mechanism’s differential‑privacy loss and on the Bayesian optimal re‑identification success probability.

It is worth mentioning that we find the blanket decomposition to be optimal among all possible decompositions in the analysis of re-identification attacks. This parallels the result in~\cite{su2025decompositionbasedoptimalboundsprivacy}, which established the optimality of the blanket decomposition for privacy amplification analysis. Overall, the clone decomposition, due to its simple form, provides consistent and concise bounds, whereas the blanket decomposition can yield tighter, mechanism-specific tailored bounds for particular differential privacy mechanisms.



\textbf{Our contributions are summarized as follows:}
\begin{itemize}
  \item We introduce a formal definition of the Bayesian success probability $\beta_n(P, Q)$ for identifying a sample drawn from $P$ among $n-1$ decoys drawn from $Q$ under random shuffling, and define the corresponding Bayesian advantage in both additive and multiplicative forms.
  
  \item We derive an exact analytical expression for $\beta_n(P, Q)$ using the likelihood ratio method, and characterize its asymptotic behavior.
  
  \item We establish mutual bounds connecting the Bayesian advantage and the total variation distance,
  showing that they are equivalent up to constant factors.
  
  \item We extend our analysis to the general shuffle model and prove that if the local randomizer $\mathcal{R}$ satisfies $\varepsilon$-DP, then the Bayesian re-identification success probability satisfies $\beta_n(\mathcal{R}) \le \frac{e^{\varepsilon}}{n}$. We also derive a tight asymptotic bound on the multiplicative advantage.
\end{itemize}

\section{Background}

\subsection{Differential privacy}\label{sec_background_dp}

Differential privacy is a privacy-preserving framework for randomized algorithms. Intuitively, an algorithm is differentially private if the output distribution does not change significantly when a single individual's data is modified. This ensures that the output does not reveal substantial information about any individual in the dataset.

The classical definition of differential privacy (DP) assumes that the server has direct access to the data of all $n$ users \cite{Dwork2006}. In real-world applications, \emph{Local Differential Privacy} (LDP) is widely adopted, as it eliminates the need for a trusted curator by applying randomized noise to each user's data before any aggregation takes place~\cite{Cormode18,LDPsurvey,Microsoft17,Apple17}. However, this decentralized approach often incurs substantial utility loss due to the high level of noise required to ensure privacy.

To address this trade-off, the \textit{shuffle model} introduces a trusted shuffler between the users and the aggregator~\cite{cheu19,Erlingsson19,Prochlo}. In this work, we focus on the \emph{single-message shuffle model}, in which each client sends a single report generated from their data, and these reports are anonymized by random shuffling before being forwarded to the server~\cite{feldman2023soda}. Here we give the formal definitions of DP, LDP, and shuffle DP.

We say that random variables \( P \) and \( Q \) are $(\varepsilon, \delta)$-indistinguishable if for all set $T$:
\[
\Pr[P\in T]\le e^{\varepsilon}\Pr[Q\in T]+\delta.
\]
If two datasets \( X \) and \( X' \) have the same size and differ only by the data of a single individual, they are referred to as neighboring datasets (denoted by \( X \simeq X' \)).

\begin{definition}[Differential Privacy]
An algorithm \( \mathcal{R} : \mathbb{X}^n \to \mathbb{Z} \) satisfies $(\varepsilon, \delta)$-differential privacy if for all neighboring datasets \( X, X' \in \mathbb{X}^n \), \( \mathcal{R}(X) \) and \( \mathcal{R}(X') \) are $(\varepsilon, \delta)$-indistinguishable.    
\end{definition}

\begin{definition}[Local Differential Privacy]
An algorithm \( \mathcal{R} : \mathbb{X} \to \mathbb{Y} \) satisfies local $(\varepsilon, \delta)$-differential privacy if for all \( x, x' \in \mathbb{X} \), \( \mathcal{R}(x) \) and \( \mathcal{R}(x') \) are $(\varepsilon, \delta)$-indistinguishable.    
\end{definition}

Here, \( \varepsilon \) is referred to as the privacy budget, which controls the privacy loss, while \( \delta \) allows for a small probability of failure. When \( \delta = 0 \), the mechanism is also called \( \varepsilon \)-DP.

A single-message protocol \( \mathcal{P} \) in the shuffle model is defined as a pair of algorithms \( \mathcal{P} = (\mathcal{R}, \mathcal{A}) \), where \( \mathcal{R} : \mathbb{X} \to \mathbb{Y} \), and \( \mathcal{A} : \mathbb{Y}^n \to \mathbb{O} \). We call \( \mathcal{R} \) the \textit{local randomizer}, \( \mathbb{Y} \) the \textit{message space} of the protocol, \( \mathcal{A} \) the \textit{analyzer}, and \( \mathbb{O} \) the \textit{output space}~\cite{Balle2019,cheu19}. 

The overall protocol implements a mechanism \( \mathcal{P} : \mathbb{X}^n \to \mathbb{O} \) as follows: Each user \( i \) holds a data record \( x_i \), to which they apply the local randomizer to obtain a message \( y_i = \mathcal{R}(x_i) \). The messages \( y_i \) are then shuffled and submitted to the analyzer. Let \( \mathcal{S}(y_1, \dots, y_n) \) denote the random shuffling step, where \( \mathcal{S} : \mathbb{Y}^n \to \mathbb{Y}^n \) is a \textit{shuffler} that applies a random permutation to its inputs. 

In summary, the output of \( \mathcal{P}(x_1, \dots, x_n) \) is given by
\[
\mathcal{A} \circ \mathcal{S} \circ \mathcal{R}^n (\boldsymbol{x}) = \mathcal{A}(\mathcal{S}(\mathcal{R}(x_1), \dots, \mathcal{R}(x_n))).
\]

\begin{definition}[Differential Privacy in the Shuffle Model]
A protocol \( \mathcal{P} = (\mathcal{R}, \mathcal{A}) \) satisfies $(\varepsilon, \delta)$-differential privacy in the shuffle model if for all neighboring datasets \( X, X' \in \mathbb{X}^n \), the distributions \( \mathcal{S} \circ \mathcal{R}^n(X) \) and \( \mathcal{S} \circ \mathcal{R}^n(X') \) are $(\varepsilon, \delta)$-indistinguishable.    
\end{definition}

The ``amplification-by-shuffling" theorem in the shuffle model implies that when each of the \(n\) users randomizes their data using an \(\varepsilon_0\)-LDP mechanism, the collection of shuffled reports satisfies \((\varepsilon(\varepsilon_0, \delta, n), \delta)\)-DP, where \(\varepsilon(\varepsilon_0, \delta, n) \ll \varepsilon_0\) for sufficiently large \(n\) and not too small $\delta$~\cite{feldman2023soda}.

Previous research has primarily focused on analyzing and computing the differential privacy guarantees—namely, the parameters $(\varepsilon, \delta)$—achieved by a given protocol $\mathcal{P}$ in the shuffle model \cite{Erlingsson19,Balle2019,cheu19,Feldman2021,feldman2023soda}. Our work is the first to formally consider and theoretically analyze re-identification attacks in the Shuffle DP setting.

\subsection{Honeyword}

Frequent password breaches—over 4{,}450 incidents exposing 28 billion records by July 2025~\cite{44}—have motivated leakage detection mechanisms such as the Honeyword system~\cite{9}.

In this system, the authentication server stores a user's real password alongside $n - 1$ plausible decoys—called \emph{honeywords}—randomly shuffled to conceal the real one~\cite{9}. Upon a breach, the attacker obtains $n$ indistinguishable candidates rather than the actual password. If a honeyword is later used in a login attempt, the system detects this anomaly and raises an alert~\cite{9,1}. 

Let $P$ denote the distribution of human-chosen passwords and $Q$ the distribution of honeywords. The security of the Honeyword system is then captured by the Bayesian success probability $\beta_n(P, Q)$.

Prior studies on honeyword security typically consider an adversary who has access to a dataset $D$ of human-chosen passwords, which is assumed to be sampled from the underlying distribution $P$, along with knowledge of the honeyword generation mechanism $Q$ \cite{1,21,23,40}. Due to the lack of full knowledge of $P$, the optimal attack strategy in this setting is analytically intractable. As a result, existing works rely on designing heuristic attack algorithms to evaluate the security of honeyword systems.

In contrast, we consider, for the first time, a theoretically strongest adversary who has complete knowledge of the distribution $P$. Under this setting, we provide an exact theoretical characterization of the Bayes-optimal success probability $\beta_n(P, Q)$, including its asymptotic analysis and its relationship to the total variation distance. We believe this perspective offers new insights into the security analysis of honeyword systems and related mechanisms.

\section{Problem formulation}
\subsection{Re-identification attack in the basic setting}\label{sec_definition_basic}
The formal definition of the re-identification attack in the basic setting is given in Game~\ref{guess_game}. Game~\ref{guess_game} additionally introduces a parameter $k$, allowing the adversary $\mathcal{A}$ to submit $k$ guesses. If the position of the sample drawn from distribution $P$ (i.e., $\sigma^{-1}(1)$) is among the $k$ guesses submitted by $\mathcal{A}$, then the adversary wins the game. We use the notation $y_1 \gets_{P} \mathbb{Y}$ to denote that $y_1$ is sampled from distribution $P$ over the domain $\mathbb{Y}$, and we write $\mathcal{P}_n$ to denote the set of all permutations over $[n]$.

\begin{definition}
We define the maximum success probability of such an adversary as
\[
\beta_n^k(P, Q) = \max_{\mathcal{A}} \Pr\left[\mathbf{GuessGame}_{\mathcal{A}}^{P,Q,n}(k) = \mathbf{True} \right].
\]
The special case $k = 1$, denoted $\beta_n^1(P, Q)$, corresponds to the single-guess setting, which is of primary interest in practice. For simplicity, we refer to it as $\beta_n(P, Q)$.
\end{definition}

We consider an information-theoretic adversary: that is, an attacker with full knowledge of the distributions $P$ and $Q$, and unbounded computational power. In Section~\ref{sec:bayesian_attack}, we show that the optimal attack strategy for such an adversary is to follow the posterior distribution induced by Bayes’ rule. Consequently, we refer to $\beta_n(P, Q)$ as the \emph{Bayesian success probability}.

\textbf{Comparison with alternative definitions.}
It is important to emphasize that in our formulation, the adversary is tasked with identifying the \emph{position} $\sigma^{-1}(1)$ of the sample drawn from $P$, rather than recovering its actual \emph{value} $y_1$. This distinction is essential, as defining success based on recovering the value $y_1$ leads to metrics that are not robust.

Consider, for instance, the trivial case where both $P$ and $Q$ deterministically output the same value (e.g., $1$). In this scenario, an adversary that simply outputs $y_1 = 1$ will always succeed, achieving a success probability of $1$.

As another example, let $P$ and $Q$ be uniform distributions over $[m]$. As $m$ grows, the probability of value collisions in the multiset $\{y_1, \dots, y_n\}$ vanishes. Without such collisions, attempting to recover the position of the sample drawn from $P$ reduces to uniform random guessing, yielding success probability exactly $1/n$.

By contrast, as formally shown in Section~4.1, when $P = Q$, the success probability of any adversary attempting to identify $\sigma^{-1}(1)$ is exactly $1/n$, \emph{regardless of the structure of $P$ and $Q$}. Furthermore, analyzing the success probability of recovering the actual value $y_1$ is often \emph{analytically intractable}~\cite{10221886}. In comparison, $\beta_n(P, Q)$ admits a closed-form expression and is computationally tractable.

\renewcommand{\algorithmcfname}{Game}
\begin{algorithm}[t]
$y_1 \gets_{P} \mathbb{Y}$\\
$y_2,y_3,\dots,y_n \gets_{Q} \mathbb{Y}$\\
$\boldsymbol{y}\gets(y_1,y_2,y_3,\dots,y_n)$\\
$\sigma \gets_{\$} \mathcal{P}_n$ \tcp{Sample a random permutation}
$\boldsymbol{y}_{\sigma}\gets (y_{\sigma(1)},y_{\sigma(2)},\dots,y_{\sigma(n)})$\\
$G=\{g_1,g_2,\cdots,g_k\} \gets \mathcal{A}(\boldsymbol{y}_{\sigma})$\\
\textbf{return} $\sigma^{-1}(1)\in G$.
\caption{$\mathbf{GuessGame}^{P,Q,n}_{\mathcal{A}}(k)$}\label{guess_game}

\end{algorithm}

\renewcommand{\algorithmcfname}{Game}
\begin{algorithm}[t]
\For{$i=1\ \mathrm{to}\ n$}{$y_i \gets_{\mathcal{R}(x_i)} \mathbb{Y}$}
$\boldsymbol{y}\gets (y_1,y_2,y_3,\dots,y_n)$\\
$\sigma \gets_{\$} \mathcal{P}_n$ \tcp{Sample a random permutation}
$\boldsymbol{y}_{\sigma}\gets (y_{\sigma(1)},y_{\sigma(2)},\dots,y_{\sigma(n)})$\\
$g \gets \mathcal{A}(\boldsymbol{y}_{\sigma})$\\
\textbf{return} $g=\sigma^{-1}(1)$.
\caption{$\mathbf{GuessGame}^{\mathcal{R},n}_{\mathcal{A}}(\boldsymbol{x})$}\label{guess_game2}

\end{algorithm}

\subsection{Re-identification attack in Shuffle DP}\label{sec_definition_attack_dp}
The formal definition of the re-identification attack in the shuffle DP setting is given in Game~\ref{guess_game2}. For simplicity, we focus on the single-guess case; however, our analysis naturally extends to settings where the adversary is allowed multiple guesses.

For clarity, we first present the definitions of the Bayesian attack success probability and its associated advantage when $x_1, x_2, \dots, x_n$ take deterministic values. We then explain how these definitions can be naturally generalized to the case where each $x_i$ is drawn from a probability distribution $V_i$.

\begin{definition}\label{definition_rsp_shuffle}
The \emph{Bayesian re-identification success probability} for a user with input $x_1$ in the Shuffle DP model is defined as
\[
\beta_n(\mathcal{R}, x_1) = \max_{\mathcal{A},\, \boldsymbol{x} \in \mathbb{X}^n,\, \boldsymbol{x}[1] = x_1} 
\Pr\left[\mathbf{GuessGame}^{\mathcal{R}, n}_{\mathcal{A}}(\boldsymbol{x}) = \mathbf{True}\right],
\]
where the maximum is taken over all adversaries $\mathcal{A}$ and all input vectors $\boldsymbol{x}$ such that the first user's input is fixed to $x_1$.
\end{definition}

\begin{definition}
Define the \emph{worst-case Bayesian success probability} in the Shuffle DP model as
\[
\beta_n(\mathcal{R}) := \max_{x_1 \in \mathbb{X}} \beta_n(\mathcal{R}, x_1).
\]
\end{definition}

\begin{definition}
We define the corresponding \emph{Bayesian advantage} in both additive and multiplicative forms:
\begin{align*}
\mathsf{Adv}_n^{+}(\mathcal{R},x_1) &= \beta_n(\mathcal{R},x_1) - \frac{1}{n},\\
\mathsf{Adv}_n^{\times}(\mathcal{R},x_1) &= \beta_n(\mathcal{R},x_1) / \frac{1}{n}=n\cdot\beta_n(\mathcal{R},x_1),\\
\mathsf{Adv}_n^{+}(\mathcal{R})&=\max_{x_1\in\mathbb{X}} \mathsf{Adv}_n^{+}(\mathcal{R},x_1),\\
\mathsf{Adv}_n^{\times}(\mathcal{R})&=\max_{x_1\in\mathbb{X}} \mathsf{Adv}_n^{\times}(\mathcal{R},x_1).
\end{align*}
\end{definition}

In our theoretical analysis, we consider a powerful adversary who has full knowledge of the distributions $\mathcal{R}(x_i),i=1,2,\dots,n$. This strong assumption allows our results to upper-bound the capabilities of weaker adversaries who may only have partial information.

\emph{Interpretation.} It is worth noting that the inputs $x_i$ may themselves be random variables over $\mathbb{X}$. For instance, we may have $x_i \sim V_i$, where $V_i$, $i=1,2,\dots,n$, are probability distributions on $\mathbb{X}$. In this setting, we assume that the adversary knows each $V_i$ and can additionally determine the corresponding output distribution $P_i = \mathcal{R}(x_i)$.

Let $\mathcal{D}[\mathbb{X}]$ denote the set of all probability distributions over $\mathbb{X}$.
In the following, we distinguish between the notions of a random variable and a probability distribution: a probability distribution is a deterministic object, which can be described by a vector of probabilities; a random variable, on the other hand, is inherently stochastic and is said to follow a certain probability distribution.

It can be shown that for any probability distributions $V_1, V_2 \in \mathcal{D}[\mathbb{X}]$, its corresponding random variables $v_1,v_2$ and any $\varepsilon$-DP mechanism $\mathcal{R}$, the distributions $\mathcal{R}(v_1)$ and $\mathcal{R}(v_2)$ also satisfy $\varepsilon$-differential privacy:
\[
\forall y: \frac{\Pr[\mathcal{R}(v_1)=y]}{\Pr[\mathcal{R}(v_2)=y]} 
\le \frac{\sup_{x\in \mathbb{X}} \Pr[\mathcal{R}(x)=y]}{\inf_{x\in \mathbb{X}} \Pr[\mathcal{R}(x)=y]} \le e^{\varepsilon},
\]
where we use $\Pr[\mathcal{R}(v_1)=y] = \sum_x \Pr[v_1=x] \cdot \Pr[\mathcal{R}(x)=y] \in \big[\inf_{x\in \mathbb{X}} \Pr[\mathcal{R}(x)=y], \sup_{x\in \mathbb{X}} \Pr[\mathcal{R}(x)=y]\big]$, and the result follows directly from the definition of $\varepsilon$-DP.

Consequently, in the scenario where $x_i \sim V_i$, one can equivalently consider a new $\varepsilon$-DP mechanism 
\[
\mathcal{R}^*: \mathcal{D}[\mathbb{X}] \to \mathbb{Y}, \quad \mathcal{R}^*(V_i) = \mathcal{R}(x_i).
\]
This perspective transforms the input from a random variable $x_i$ into a deterministic probability distribution description $V_i$.

We illustrate the above transformation using the $\ln(3)$-DP 2-ary randomized response mechanism \cite{Warner65}.
\begin{exmp}
Denote \( \{1,2,\dots,k\} \) by \( [k] \) and the uniform distribution on \( [k] \) by \( \mathcal{U}_{[k]} \). 
For any \( k \in \mathbb{N} \) and \( \varepsilon > 0 \), the \( k \)-randomized response mechanism \( k\text{RR}: [k] \to [k] \) is defined as:
\[
k\text{RR}(x) =
\begin{cases}
x, & \text{with probability } \frac{e^{\varepsilon_0} - 1}{e^{\varepsilon} + k - 1}, \\
y \sim \mathcal{U}_{[k]}, & \text{with probability } \frac{k}{e^{\varepsilon} + k - 1}.
\end{cases}
\]
If $x_1$ takes the fixed value $1$, it corresponds to $V_1 = (1.0, 0)$. If $x_1$ takes value $1$ with probability $0.4$ and value $2$ with probability $0.6$, then $V_1 = (0.4, 0.6)$. For $V=(p,1-p)$, the mechanism $\mathcal{R}^*(V)$ defines a probability distribution over $\{1,2\}$, with probabilities given by $(0.25 + 0.5p,\, 0.75 - 0.5p)$.
\end{exmp}

For notational convenience, we will henceforth write $\mathcal{R}(x_i)$; the reader may interpret this as $\mathcal{R}^*(V_i)$ in cases where $x_i \sim V_i$. 
For example, Definition \ref{definition_rsp_shuffle} can be interpreted as follows:
\begin{definition}\label{re_definition_rsp_shuffle}
The \emph{Bayesian re-identification success probability} for a user with input $x_1 \sim V_i$ in the Shuffle DP model is defined as
\begin{align*}
\beta_n&(\mathcal{R},x_1):=
\beta_n(\mathcal{R}^*, V_1)\\
&= \max_{\mathcal{A},\, \boldsymbol{V} \in \mathcal{D}[\mathbb{X}]^n,\, \boldsymbol{V}[1] = V_1} 
\Pr\left[\mathbf{GuessGame}^{\mathcal{R}^*, n}_{\mathcal{A}}(\boldsymbol{V}) = {\mathbf{True}}\right],
\end{align*}
where the maximum is taken over all adversaries $\mathcal{A}$ and all input \textbf{distribution} vectors $\boldsymbol{V}=(V_1,V_2,\dots,V_n)$ such that the first user's input \textbf{distribution} is fixed to $V_1$.
\end{definition}
Since $\mathcal{R}^*$ still satisfies $\epsilon$-DP, the clone decomposition remains valid.
In addition, the blanket decomposition of $\mathcal{R}^*$ coincides with that of $\mathcal{R}$ (see Section~\ref{sec_blanket} for details). This justifies that our analysis applies without loss of generality.

\subsection{Relation to quantitative information flow}
This work focuses on re-identification attacks in the shuffle model. The most closely related line of research is \emph{Quantitative Information Flow} (QIF) \cite{9da52c96268c449d85dacfcd15f27685}. In the QIF framework, a system is modeled as an information-theoretic channel that takes in a secret input and produces an observable output. The amount of information leakage is defined as the difference between the \emph{vulnerability} of the secret before and after passing through the channel—that is, the difference between the prior and posterior vulnerabilities, which quantify the adversary’s ability to perform a successful inference attack \cite{10.1109/CSF.2012.26, 10.1109/CSF.2014.29}.

In other words, within the QIF formulation, the attacker aims to infer the value of the secret itself. In our terminology, this corresponds to guessing $y_1$ in Game~\ref{guess_game} or $x_1$ in Game~\ref{guess_game2}, whereas our formulation focuses on guessing the \emph{position} of the secret (i.e., $\sigma^{-1}(1)$). We compare our definition with the QIF-based formulation of Game~\ref{guess_game} in Section \ref{sec_definition_basic}.

There have been studies exploring QIF in the context of differential privacy \cite{doi:10.3233/JCS-150528,8823711}; most notably, Reference \cite{10221886} was the first to examine QIF in the shuffle DP setting.
Interestingly, they compare the QIF of LDP, shuffle-only, and combined LDP+shuffle settings, and observe that, under an uninformed adversary (who does not know any individual’s data prior to accessing a dataset and assumes a uniform prior over datasets), the shuffling process contributes the majority of the privacy in $k$-RR.
However, QIF analysis in general is known to be technically challenging, and closed-form expressions are available only for a few special cases\cite{10221886}.

Specifically, their work focuses on an \emph{uninformed adversary}. In contrast, prior research typically considers a \emph{strong adversary}, who knows all individuals’ data except that of the target user. They provide an analytical treatment of the quantitative information flow for $k$-RR against an uninformed adversary, whereas the QIF of other mechanisms under an uninformed adversary remains unclear. In the case of a strong adversary, even the simple $k$-RR mechanism is analytically intractable; they only provide a brief discussion based on numerical experiments and leave a thorough analysis for future work.

In contrast, our work demonstrates that for re-identification attacks, the Bayesian advantage in shuffle DP admits a much simpler and unified analysis.

\section{Analysis in the Basic Setting}

\subsection{Bayesian optimal attack}\label{sec:bayesian_attack}

We now derive the success probability of an adversary $\mathcal{A}$ in the basic setting, where the distributions $P$ and $Q$ are known, and the adversary observes a randomly shuffled vector $\boldsymbol{y}_{\sigma} = \boldsymbol{y}'$ and outputs a guess $g$ for the index of the sample drawn from $P$.
\begin{align}
\Pr[&\sigma^{-1}(1) = g \mid \boldsymbol{y}_{\sigma} = \boldsymbol{y}'] \nonumber \\
&= \Pr[\sigma^{-1}(1) = g \mid \boldsymbol{y} = \boldsymbol{y}'_{\sigma^{-1}}] \label{eq1} \\
&= \frac{\Pr[\sigma^{-1}(1) = g \wedge \boldsymbol{y} = \boldsymbol{y}'_{\sigma^{-1}}]}{\Pr[\boldsymbol{y} = \boldsymbol{y}'_{\sigma^{-1}}]} \label{eq2} \\
&= \frac{\Pr[\sigma^{-1}(1) = g] \cdot \Pr[\boldsymbol{y} = \boldsymbol{y}'_{\sigma^{-1}} \mid \sigma^{-1}(1) = g]}{\sum_{\sigma_1 \in \mathcal{P}_n} \Pr[\sigma = \sigma_1] \cdot \Pr[\boldsymbol{y} = \boldsymbol{y}'_{\sigma^{-1}} \mid \sigma = \sigma_1]} \label{eq3} \\
&= \frac{\frac{1}{n} \cdot \frac{P(y'_g)}{Q(y'_g)} \cdot \prod_{j=1}^n Q(y'_j)}{\sum_{i=1}^n \frac{1}{n} \cdot \frac{P(y'_i)}{Q(y'_i)} \cdot \prod_{j=1}^n Q(y'_j)} \label{eq4} \\
&= \frac{\frac{P(y'_g)}{Q(y'_g)}}{\sum_{i=1}^n \frac{P(y'_i)}{Q(y'_i)}}. \label{eq5}
\end{align}

Here, $\sigma$ is a uniformly random permutation over $[n] = \{1, 2, \dots, n\}$. Equation~\eqref{eq1} holds because for each $i$, $y_i = y'_{\sigma^{-1}(i)}$. Equation~\eqref{eq2} follows from the definition of conditional probability, and Equation~\eqref{eq3} expands the denominator using the law of total probability.

Equation~\eqref{eq4} leverages two facts:  
(i) $\Pr[\sigma^{-1}(1) = g] = 1/n$ due to uniform randomness, and  
(ii) conditioned on $\sigma^{-1}(1) = g$, the value $y'_g$ is drawn from $P$, while all other $y'_j$ are drawn from $Q$. This gives:
\begin{align*}
\Pr[\boldsymbol{y} = \boldsymbol{y}'_{\sigma^{-1}} \mid \sigma^{-1}(1) = g]
&= P(y'_g) \cdot \prod_{j \ne g} Q(y'_j)\\
&= \frac{P(y'_g)}{Q(y'_g)} \cdot \prod_{j=1}^n Q(y'_j).
\end{align*}

In detail, denote $\mathcal{P}_n' = \{\sigma \mid \sigma^{-1}(1) = g\}$. Conditioning on $\sigma^{-1}(1) = g$ implies that $\sigma$ is uniformly sampled from $\mathcal{P}_n'$, where $|\mathcal{P}_n'| = (n-1)!$. For any $\sigma_* \in \mathcal{P}_n'$, we have
\[
\Pr[\boldsymbol{y} = \boldsymbol{y}'_{\sigma^{-1}} \mid \sigma = \sigma_*] = \Pr[\boldsymbol{y} = \boldsymbol{y}'_{\sigma^{-1}_*}] = P(y'_g) \cdot \prod_{j \ne g} Q(y'_j),
\]
which is independent of the choice of $\sigma_*$. Consequently,
\begin{align*}
\Pr[\boldsymbol{y} = \boldsymbol{y}'_{\sigma^{-1}} \mid \sigma^{-1}(1) = g] 
&= \sum_{\sigma_* \in \mathcal{P}_n'} \frac{1}{(n-1)!} \Pr[\boldsymbol{y} = \boldsymbol{y}'_{\sigma^{-1}_*}] \\
&= P(y'_g) \cdot \prod_{j \ne g} Q(y'_j).
\end{align*}

The same logic applies to each term in the denominator.

Equation~\eqref{eq5} reveals the form of the \textbf{Bayes-optimal strategy}: the adversary computes the likelihood ratio $P(y'_i) / Q(y'_i)$ for all $i \in [n]$, and ranks them in descending order. The index $g$ with the largest ratio is the most likely to be the sample from $P$.

Importantly, if for some $i$, $Q(y'_i) = 0$ but $P(y'_i) > 0$, we interpret $P(y'_i)/Q(y'_i) = \infty$, indicating that $y'_i$ must have come from $P$ with certainty. In that case, the attacker can identify the position $\sigma^{-1}(1) = i$ with probability 1.

\begin{corollary}\label{corollary1}
If $P = Q$, then for any $n \ge 1$, the success probability of any adversary $\mathcal{A}$ in the basic setting is exactly $1/n$. That is,
\[
\beta_n(P, P) = \frac{1}{n}, \qquad \mathsf{Adv}_n^{+}(P, P) = 0.
\]
\end{corollary}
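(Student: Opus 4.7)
The plan is to apply the Bayes-optimal posterior formula derived in Equation~\eqref{eq5} directly, specialized to the case $P=Q$. In that equation the posterior that the sample originating from $P$ sits in position $g$ given the shuffled observation $\boldsymbol{y}'$ is expressed as a normalized likelihood ratio,
\[
\Pr[\sigma^{-1}(1) = g \mid \boldsymbol{y}_{\sigma} = \boldsymbol{y}'] = \frac{P(y'_g)/Q(y'_g)}{\sum_{i=1}^{n} P(y'_i)/Q(y'_i)}.
\]
Setting $Q = P$ makes every likelihood ratio $P(y'_i)/P(y'_i)$ equal to $1$ (on the support of $P$, which is the only place the observation can land with positive probability), so the posterior collapses to $1/n$ uniformly in $g$.

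Next I would formalize the attack-success calculation. For any deterministic guessing function $\mathcal{A}$, the success probability can be written by conditioning on the observation,
\[
\Pr[\mathcal{A}(\boldsymbol{y}_{\sigma}) = \sigma^{-1}(1)] = \mathbb{E}_{\boldsymbol{y}'}\!\left[\Pr[\sigma^{-1}(1) = \mathcal{A}(\boldsymbol{y}') \mid \boldsymbol{y}_{\sigma} = \boldsymbol{y}']\right] = \mathbb{E}_{\boldsymbol{y}'}[1/n] = 1/n.
\]
Since this holds for every adversary $\mathcal{A}$ (including randomized ones, by averaging), we obtain $\beta_n(P,P) \le 1/n$; the matching lower bound is witnessed by the trivial adversary that outputs a fixed index, giving $\beta_n(P,P) = 1/n$ and hence $\mathsf{Adv}_n^{+}(P,P) = 0$.

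The only subtlety worth mentioning is the treatment of observations $\boldsymbol{y}'$ containing a coordinate outside the support of $P$: such $\boldsymbol{y}'$ occur with probability $0$ when $P=Q$, so the $0/0$ ambiguity in the likelihood-ratio formula is immaterial and can be handled by restricting the expectation above to $\boldsymbol{y}'$ in the joint support. I do not anticipate any real obstacle here; the corollary is an immediate specialization of Equation~\eqref{eq5}, and the main pedagogical point is simply that the symmetry $P=Q$ trivializes the posterior to uniform regardless of the shape of $P$.
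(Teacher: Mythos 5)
Your proposal is correct and follows essentially the same route as the paper: specialize Equation~\eqref{eq5} to $P=Q$, observe that all likelihood ratios equal $1$, and conclude the posterior is uniform so every adversary succeeds with probability exactly $1/n$. Your write-up is slightly more careful than the paper's (you make the conditioning/tower-property step and the zero-probability support issue explicit), but the argument is the same.
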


\begin{proof}
From Equation~\eqref{eq5}, when $P = Q$, the likelihood ratio $P(y'_i)/Q(y'_i) = 1$ for all $i \in [n]$. Therefore, the adversary sees $n$ identical scores, and no position is statistically distinguishable from the others. Hence, any strategy reduces to uniform guessing, and the success probability is $1/n$.
\end{proof}

\subsection{Theoretical derivation of $\beta_n(P,Q)$}

As we will demonstrate, the core of the theoretical computation of $\beta_n(P, Q)$ lies in an insightful shift in perspective. The crucial idea is to transition from working directly with the two distributions $P$ and $Q$ to studying the distributions of their likelihood ratio $\frac{P(y)}{Q(y)}$.

We define two cumulative distribution functions:
\begin{subequations}
    \begin{align}
        F(t) &= \Pr_{y \sim P}\left[\frac{P(y)}{Q(y)} \le t\right], \label{def_f} \\
        G(t) &= \Pr_{y \sim Q}\left[\frac{P(y)}{Q(y)} \le t\right], \label{def_g}
    \end{align}
\end{subequations}
where $\frac{P(y)}{Q(y)}$ is the likelihood ratio of $y$, a quantity with broad applications in information theory~\cite{Polyanskiy22}. Let $f(t)$ and $g(t)$ denote the corresponding probability density functions of $F$ and $G$, respectively.

We also introduce two useful notations\footnote{According to the definition, $M$ may be $+\infty$ in some cases.}:
\[
f(+\infty) := \Pr_{y \sim P}[Q(y) = 0], \qquad M := \sup_{y : Q(y) \ne 0} \frac{P(y)}{Q(y)}.
\]

Recall that the Bayes-optimal attacker sorts the entries by descending likelihood ratio. The distribution of the likelihood ratio of a sample drawn from $P$ is exactly $f$. For example, if the likelihood ratio of $y_1$ is $1.4$, then the attacker will succeed in the first guess if and only if all other $y_i$, $i=2,3,\dots,n$, have likelihood ratios less than $1.4$. The probability of this event is $f(1.4) \cdot G^{n-1}(1.4)$. Integrating over all possible values of the likelihood ratio yields:
\begin{align}
\beta_n(P, Q) = f(+\infty) + \int_0^M f(t) G^{n-1}(t) \, \mathrm{d}t. \label{epsilon11}
\end{align}

A similar approach yields the expression for $\beta_n^k(P, Q)$. By the definition of $\beta_n^k(P, Q)$ and the optimal strategy, the Bayesian adversary wins in $\mathbf{GuessGame}^{P,Q,n}_{\mathcal{A}}(k)$ if and only if the number of $y_i$ ($i=2,\dots,n$) with likelihood ratios greater than that of $y_1$ is fewer than $k$.

\begin{theorem}
The Bayesian success probability with $k$ guesses is given by:
\begin{align}
\beta&_n^k(P,Q) = \nonumber\\
&f(+\infty) + \sum_{j=1}^{k} \int_0^M \binom{n-1}{j-1} f(t) G^{n-j}(t) \left(1 - G(t)\right)^{j-1} \, \mathrm{d}t.
\label{f6}
\end{align}
\end{theorem}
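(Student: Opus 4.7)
The plan is to extend the Bayes-optimal analysis of Section 4.1 from $k=1$ to general $k$. First I would argue that the Bayes-optimal $k$-guess strategy is to output the $k$ indices with the largest likelihood ratios $P(y'_i)/Q(y'_i)$. This follows from Equation~\eqref{eq5}: for fixed $\boldsymbol{y}'$ the posterior probability that position $i$ is the $P$-sample is proportional to $P(y'_i)/Q(y'_i)$, so to maximize $\Pr[\sigma^{-1}(1)\in G\mid \boldsymbol{y}_\sigma=\boldsymbol{y}']$ — which is the sum of the $k$ chosen posteriors — the adversary must select the indices with the top-$k$ ratios. Taking expectation over $\boldsymbol{y}_\sigma$ shows this strategy also maximizes the unconditional success probability.

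Under this optimal strategy the adversary wins iff the likelihood ratio of $y_1\sim P$ is ranked among the top $k$, equivalently iff strictly fewer than $k$ of the decoys $y_2,\dots,y_n\sim Q$ have a strictly larger likelihood ratio. I would then condition on the value $T:=P(y_1)/Q(y_1)$. The atom at $T=+\infty$ (i.e., $Q(y_1)=0$) has probability $f(+\infty)$, and in this case the $P$-sample is unambiguously identified, contributing the term $f(+\infty)$. For $T=t\in[0,M]$, $T$ has density $f(t)$, and the likelihood ratios of the $n-1$ decoys are i.i.d.\ with cdf $G$; hence the number $N$ of decoys whose ratio exceeds $t$ is $\mathrm{Binomial}(n-1,\,1-G(t))$. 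The attacker wins iff $N\le k-1$, which happens with probability
\begin{equation*}
\sum_{j=1}^{k}\binom{n-1}{j-1}(1-G(t))^{j-1}G(t)^{n-j},
\end{equation*}
where $j-1$ indexes the number of decoys with ratio above $t$. Multiplying by $f(t)$, integrating over $[0,M]$, and adding the $+\infty$ atom yields Equation~\eqref{f6}. The case $k=1$ reduces to Equation~\eqref{epsilon11}, providing a consistency check.

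The main obstacle is the treatment of ties in the likelihood ratio, which can arise with positive probability when $P$ and $Q$ are discrete and would in principle affect how the top-$k$ set is chosen. I would resolve this the same way as in Equation~\eqref{epsilon11}: when $F$ and $G$ have atoms, the ``density'' $f(t)\,\mathrm{d}t$ is interpreted as the measure induced by $F$, and ties are broken by independent uniform randomization. Since conditional on the \emph{multiset} of likelihood ratios each decoy is exchangeable with $y_1$ among the tied positions, uniform tie-breaking preserves the event $N\le k-1$ as the correct winning condition, so the same binomial formula applies verbatim.
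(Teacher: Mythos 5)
Your proposal is correct and follows essentially the same route as the paper: condition on the likelihood ratio $t$ of the $P$-sample, note that the number of decoys exceeding $t$ is $\mathrm{Binomial}(n-1,1-G(t))$, and sum the win events over $j=1,\dots,k$, with the $Q(y_1)=0$ atom contributing $f(+\infty)$. Your added justification of the top-$k$ optimality via Equation~\eqref{eq5} and your remark on tie-breaking are in fact more explicit than the paper's own (very brief) proof, which defers the tie subtlety to its separate discrete-case theorem.
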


\begin{proof}
The adversary succeeds in the $j$-th guess if and only if exactly $j-1$ of the $n-1$ decoy samples have likelihood ratios greater than that of $y_1$. The integrand accounts for all configurations where this occurs, and the integral aggregates over all possible threshold values $t$. Summing over $j$ from $1$ to $k$ gives the desired result.
\end{proof}

The following property will be essential in simplifying these expressions:

\begin{property}\label{prop0}
The density functions $f$ and $g$ defined above satisfy:
\begin{align}
f(t) = t \cdot g(t). \label{f2}
\end{align}
\end{property}

\begin{proof}
Let $S_t = \{y \in \mathbb{Y} \mid \frac{P(y)}{Q(y)} = t\}$. Then:
\begin{align*}
\frac{f(t)}{g(t)} 
= \frac{P(S_t)}{Q(S_t)} 
= \frac{\sum_{y \in S_t} P(y)}{\sum_{y \in S_t} Q(y)} 
= \frac{\sum_{y \in S_t} t \cdot Q(y)}{\sum_{y \in S_t} Q(y)} = t. 
\end{align*}
\end{proof}

\begin{theorem}\label{theorem2}
Let $P$ and $Q$ be probability distributions such that 
\[
M := \sup_{y : Q(y) \ne 0} \frac{P(y)}{Q(y)} < \infty.
\]
Then the Bayesian success probability $\beta_n(P, Q)$ can be expressed as:
\begin{align}
\beta_n(P, Q) = \frac{1}{n} \left( M - \int_0^M G^n(t) \, \mathrm{d}t \right) + f(+\infty). \label{f3}
\end{align}
\end{theorem}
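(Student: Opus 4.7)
The plan is to take the expression for $\beta_n(P,Q)$ already derived in Equation~\eqref{epsilon11}, namely
\[
\beta_n(P, Q) = f(+\infty) + \int_0^M f(t)\, G^{n-1}(t)\, \mathrm{d}t,
\]
and massage the integral into the form stated in the theorem using Property~\ref{prop0} together with integration by parts.

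First, I would invoke Property~\ref{prop0}, which gives $f(t) = t \cdot g(t)$, to rewrite the integrand as $t \cdot g(t) \cdot G^{n-1}(t)$. The key observation is then that $g(t) G^{n-1}(t)$ is (up to the factor $1/n$) the derivative of $G^n(t)$: since $g = G'$, we have $\frac{d}{dt} G^n(t) = n\, g(t)\, G^{n-1}(t)$. This lets me rewrite
\[
\int_0^M f(t)\, G^{n-1}(t)\, \mathrm{d}t = \frac{1}{n} \int_0^M t \cdot \frac{d}{dt} G^n(t)\, \mathrm{d}t.
\]

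Next I would integrate by parts, taking $u = t$ and $dv = \frac{d}{dt} G^n(t)\, dt$, to obtain
\[
\frac{1}{n} \left[ t \cdot G^n(t) \Big|_0^M - \int_0^M G^n(t)\, \mathrm{d}t \right].
\]
The boundary term at $t = 0$ vanishes trivially. For the boundary term at $t = M$, I would argue that $G(M) = 1$: by definition $G(M) = \Pr_{y \sim Q}[P(y)/Q(y) \le M]$, and since under $y \sim Q$ the point $y$ almost surely satisfies $Q(y) > 0$, the ratio $P(y)/Q(y)$ is bounded above by $M = \sup_{y : Q(y) \ne 0} P(y)/Q(y)$ almost surely. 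Hence the boundary term at $M$ equals $M$, yielding
\[
\int_0^M f(t)\, G^{n-1}(t)\, \mathrm{d}t = \frac{1}{n}\left( M - \int_0^M G^n(t)\, \mathrm{d}t \right),
\]
and adding back $f(+\infty)$ gives exactly \eqref{f3}.

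There is no substantive obstacle in this proof; the only subtle point worth verifying carefully is the boundary evaluation $G(M) = 1$, which relies on the hypothesis $M < \infty$ and the fact that $Q$ puts no mass on $\{y : Q(y) = 0\}$. If one wanted to be fully rigorous in the case where $F$ or $G$ is not absolutely continuous (so that $f$, $g$ are not classical densities), the same computation can be carried out using Lebesgue–Stieltjes integration: $\int_0^M t\, dG^n(t) = M - \int_0^M G^n(t)\, dt$ by the usual integration-by-parts formula for monotone functions, and the identity $dF(t) = t\, dG(t)$ (the measure-theoretic analogue of Property~\ref{prop0}) delivers the same conclusion.
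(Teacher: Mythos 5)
Your proposal is correct and follows essentially the same route as the paper's proof: start from Equation~\eqref{epsilon11}, apply Property~\ref{prop0} to write $f(t)=t\,g(t)$, recognize $t\,G^{n-1}(t)\,\mathrm{d}G(t)=\frac{1}{n}t\,\mathrm{d}G^n(t)$, and integrate by parts with $G(M)=1$ giving the boundary term $M$. Your additional care in justifying $G(M)=1$ and the Lebesgue--Stieltjes remark are sound refinements of the same argument.
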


\begin{proof}
Using integration by parts and Property~\ref{prop0}:
\begin{align*}
\beta_n(P,Q) 
&= \int_0^M f(t) G^{n-1}(t) \, \mathrm{d}t + f(+\infty) \\
&= \int_0^M t \cdot g(t) \cdot G^{n-1}(t) \, \mathrm{d}t + f(+\infty) \\
&= \int_0^M t \cdot G^{n-1}(t) \, \mathrm{d}G(t) + f(+\infty) \\
&= \frac{1}{n} \int_0^M t \, \mathrm{d}G^n(t) + f(+\infty) \\
&= \frac{1}{n} \left[ t G^n(t) \big|_0^M - \int_0^M G^n(t) \, \mathrm{d}t \right] + f(+\infty) \\
&= \frac{1}{n} \left( M - \int_0^M G^n(t) \, \mathrm{d}t \right) + f(+\infty). \qedhere
\end{align*}
\end{proof}

Equation~\eqref{f3} clearly shows how $\beta_n(P, Q)$ scales with $n$: 
\begin{theorem}\label{theorem:asymptotic}
Let $P$ and $Q$ be probability distributions such that $M := \sup\limits_{y : Q(y) \ne 0} \frac{P(y)}{Q(y)} < \infty$.
Then the Bayesian success probability
\[
\beta_n(P, Q) = f(+\infty) + \frac{M}{n} - \ell(n),
\]
where $\ell(n) \ge 0 \text{ and } \ell(n) = o\left(\frac{1}{n}\right)$.
\end{theorem}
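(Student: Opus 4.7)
The plan is to read this theorem as a direct asymptotic unpacking of the closed form already established in Theorem~\ref{theorem2}. From that theorem we have
\[
\beta_n(P,Q) \;=\; f(+\infty) + \frac{M}{n} - \frac{1}{n}\int_0^M G^n(t)\,\mathrm{d}t,
\]
so the natural identification is
\[
\ell(n) \;:=\; \frac{1}{n}\int_0^M G^n(t)\,\mathrm{d}t,
\]
and the task reduces to proving (i)~$\ell(n)\ge 0$ and (ii)~$n\,\ell(n)=\int_0^M G^n(t)\,\mathrm{d}t\to 0$ as $n\to\infty$.

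The first step is immediate: $G$ is a cumulative distribution function, so $G(t)\in[0,1]$ and hence $G^n(t)\ge 0$ on $[0,M]$, giving $\ell(n)\ge 0$. The second step is where the actual work lies, and I would argue it by pointwise decay plus dominated convergence. The key observation is that for every $t<M$ we have $G(t)<1$: indeed, $G(t)=1$ would mean $\Pr_{y\sim Q}[P(y)/Q(y)\le t]=1$, contradicting the fact that $M=\sup_{y:Q(y)\ne 0}P(y)/Q(y)$ is strictly larger than $t$ (so some $y$ with $Q(y)>0$ has ratio exceeding $t$, contributing positive $Q$-mass above $t$). Consequently $G^n(t)\to 0$ for every $t\in[0,M)$, i.e.\ almost everywhere on $[0,M]$ with respect to Lebesgue measure.

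Next I would apply the dominated convergence theorem: the functions $G^n$ are uniformly bounded by the constant $1$, which is integrable on the finite interval $[0,M]$ since $M<\infty$ by assumption. Therefore
\[
\lim_{n\to\infty}\int_0^M G^n(t)\,\mathrm{d}t \;=\; \int_0^M \lim_{n\to\infty} G^n(t)\,\mathrm{d}t \;=\; 0,
\]
which is exactly $n\,\ell(n)\to 0$, i.e.\ $\ell(n)=o(1/n)$. Combining with the closed form of Theorem~\ref{theorem2} yields the claim.

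\noindent\textbf{Main obstacle.} The only delicate point is the claim ``$G(t)<1$ for all $t<M$,'' which depends on the distinction between the supremum and the essential supremum of the likelihood ratio under $Q$. In the discrete setting that this paper works in, every $y$ with $Q(y)>0$ carries strictly positive $Q$-mass, so $\sup$ and $\operatorname{ess\,sup}$ coincide and the argument above is clean. If one wanted to extend the result to a general measure-theoretic setting, one would either interpret $M$ as $\operatorname{ess\,sup}_{y\sim Q}P(y)/Q(y)$ or note that the set $\{t\in[0,M):G(t)=1\}$ is a Lebesgue-null set, so the dominated-convergence step still pushes through. I do not foresee any further obstacle, since the quantitative rate is not requested—only the $o(1/n)$ decay—so no explicit bound on $G^n$ or tail estimate of the likelihood ratio is needed.
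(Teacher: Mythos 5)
Your proposal is correct and follows essentially the same route as the paper: both define $\ell(n) = \frac{1}{n}\int_0^M G^n(t)\,\mathrm{d}t$ from the closed form of Theorem~\ref{theorem2}, observe $0 \le G \le 1$ for nonnegativity, and invoke pointwise decay of $G^n(t)$ for $t < M$ together with dominated convergence on the finite interval $[0,M]$. Your additional justification that $G(t) < 1$ for $t < M$ (via the sup versus essential-sup distinction in the discrete setting) fills in a detail the paper leaves implicit, but the argument is the same.
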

\begin{proof}
From Theorem~\ref{theorem2}, we have
\[
\beta_n(P, Q) = f(+\infty) + \frac{M}{n} - \frac{1}{n} \int_0^M G^n(t) \, \mathrm{d}t.
\]
Define $\ell(n) := \frac{1}{n} \int_0^M G^n(t) \, \mathrm{d}t$. Since $0 \le G(t) \le 1$ for all $t$ (as $G$ is a cumulative distribution function), it follows that $0 \le \ell(n) \le \frac{M}{n}$. Moreover, as $G^n(t) \to 0$ pointwise for $t < M$ when $n \to \infty$, and the integrand is dominated by an integrable function, the dominated convergence theorem implies:
\[
\ell(n) = \frac{1}{n} \int_0^M G^n(t) \, \mathrm{d}t = o\left(\frac{1}{n}\right). \qedhere
\]
\end{proof}

There is a subtle point in the derivation of $\beta_n(P, Q)$ that warrants clarification: when multiple entries $y_i$ share the same maximum likelihood ratio, how should ties be resolved? This directly determines whether the cumulative function $G(t)$ in the expression $f(t) G^{n-1}(t)$ includes the probability mass at $t$.

We formalize this subtlety in the following result:

\begin{theorem}\label{therm3}
Let $P$ and $Q$ be discrete probability distributions. Then the Bayesian success probability is given by:
\[
\beta_n(P, Q) = f(+\infty) + \frac{1}{n} \sum_{t} t \left[ G^n(t) - G^n(t^-) \right],
\]
where $G(t^-)$ denotes the left-hand limit of $G$ at $t$, i.e., $G(t^-) = \Pr_{y \sim Q}\left[\frac{P(y)}{Q(y)} < t\right]$.
\end{theorem}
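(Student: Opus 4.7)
The plan is to redo the derivation of $\beta_n(P,Q)$ from scratch for the discrete case, paying careful attention to what happens when several of the entries $y_i$ share the maximum likelihood ratio. In the continuous case this is a measure-zero event and the formula of Theorem \ref{theorem2} is correct, but in the discrete case the Bayes-optimal strategy must break ties, and uniform tie-breaking is optimal by symmetry. So my approach is: condition on the value $t$ of the likelihood ratio of $y_1$, condition on how many of $y_2,\dots,y_n$ also have likelihood ratio equal to $t$, and then sum.

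First I would handle the "easy" contribution $f(+\infty)$: whenever $Q(y_1)=0$, the attacker identifies $y_1$ with certainty, contributing exactly $f(+\infty)$ regardless of $n$, just as in Theorem \ref{theorem2}. It remains to compute the contribution from values $t$ in the support of $G$. Write $p_t = F(t)-F(t^-) = \Pr_{y\sim P}[P(y)/Q(y)=t]$ and $q_t = G(t)-G(t^-) = \Pr_{y\sim Q}[P(y)/Q(y)=t]$. Property~\ref{prop0} (in its discrete form) gives $p_t = t\,q_t$. Conditioned on $y_1$ having likelihood ratio $t$, each $y_i$ for $i\ge 2$ independently has likelihood ratio $<t$ with probability $G(t^-)$ and $=t$ with probability $q_t$. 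If exactly $j$ of the other $n-1$ entries tie with $y_1$ at value $t$ (and the remaining $n-1-j$ are strictly smaller), uniform tie-breaking gives the attacker winning probability $1/(j+1)$. So the contribution from the event "$y_1$ has ratio $t$" is
\begin{equation*}
p_t \sum_{j=0}^{n-1} \binom{n-1}{j} q_t^{\,j}\, G(t^-)^{n-1-j}\cdot \frac{1}{j+1}.
\end{equation*}

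The key step I expect to be the main (though purely algebraic) obstacle is collapsing this sum into the clean difference $G(t)^n - G(t^-)^n$. For this I would use the identity $\tfrac{1}{j+1}\binom{n-1}{j}=\tfrac{1}{n}\binom{n}{j+1}$ to rewrite the sum as
\begin{equation*}
\frac{p_t}{n\,q_t}\sum_{k=1}^{n}\binom{n}{k} q_t^{\,k}\, G(t^-)^{\,n-k}
= \frac{p_t}{n\,q_t}\bigl[(q_t+G(t^-))^n - G(t^-)^n\bigr],
\end{equation*}
and then invoke $q_t + G(t^-) = G(t)$ together with $p_t/q_t = t$ (from Property~\ref{prop0}) to obtain $\tfrac{t}{n}\bigl[G(t)^n - G(t^-)^n\bigr]$. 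Summing over all $t$ in the support of $G$ and adding $f(+\infty)$ yields the claimed formula.

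Finally I would include a brief sanity check that this reduces to Theorem~\ref{theorem2} when $G$ is continuous: then $G(t^-)=G(t)$ everywhere except on a countable set and the sum over jumps is replaced by a Stieltjes integral $\tfrac{1}{n}\int_0^M t\,\mathrm{d}G^n(t)$, which integration by parts turns back into $\tfrac{M}{n} - \tfrac{1}{n}\int_0^M G^n(t)\,\mathrm{d}t$. The only non-routine aspect of the proof is the tie-breaking combinatorics; once the binomial identity is applied, the resulting telescoping is automatic.
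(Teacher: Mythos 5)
Your proposal is correct and follows essentially the same route as the paper's own proof in Appendix~A: condition on the likelihood ratio $t$ of $y_1$ and on the number of decoys tying at $t$, assign success probability $1/(j+1)$ under uniform tie-breaking, then collapse the sum via the identity $\frac{1}{j+1}\binom{n-1}{j}=\frac{1}{n}\binom{n}{j+1}$, the binomial theorem, and Property~\ref{prop0}. The only cosmetic difference is notational (your $p_t,q_t$ are the paper's $f(t),g(t)$, and your index $j$ is the paper's $m-1$).
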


The proof is provided in Appendix \ref{appendix_1}, which proceeds via a combinatorial calculation.
This result is the discrete analogue of the continuous case:
\[
\beta_n(P, Q) = f(+\infty) + \frac{1}{n} \int_0^M t \, \mathrm{d}G^n(t).
\]


\subsection{Case studies}
In this section, we demonstrated how to compute $\beta_n(P, Q)$ under two representative examples. In both cases, the decoy distribution $Q$ is chosen to be uniform—this choice is made not only to facilitate theoretical analysis, but also because the uniform distribution is commonly adopted in practice when little is known about the target distribution $P$.

\begin{exmp}\label{exmp1}
Consider the following two continuous probability distributions defined on the interval $[0, 1]$:
\begin{align*}
    P(x) &= x + 0.5, \quad &0 \le x \le 1, \\
    Q(x) &= 1, \quad &0 \le x \le 1.
\end{align*}
In this case, the likelihood ratio $\frac{P(x)}{Q(x)} = x + 0.5$, and it follows:
\begin{align*}
    f(t) = t, \quad &0.5 \le t \le 1.5,\\
    g(t) = 1, \quad &0.5 \le t \le 1.5.
\end{align*}

Using the formula in Theorem~\ref{theorem2}, we compute:
\begin{align*}
\beta_n(P, Q) 
&= \frac{1}{n}\left(M - \int_0^M G^n(t) \, \mathrm{d}t\right) 
= \frac{1}{n} \left(1.5 - \frac{1}{n+1} \right) \\
&= \frac{1.5}{n} - \frac{1}{n(n+1)}.
\end{align*}
Hence, $\beta_n(P, Q) = \Theta(1.5/n)$.

We can also compute the generalized success probability for $k$ guesses:
\[
\beta_n^k(P, Q) = \frac{3n + 2}{2n(n+1)} \cdot k - \frac{k^2}{2n(n+1)}.
\]
This is a quadratic function of $k$. The function $\beta_n^k(P, Q)$ versus $k$ for $n = 20$ is illustrated in Figure~\ref{fig3}.
\end{exmp}

\begin{figure}[tbp]
    \centering
    \includegraphics[scale=0.45]{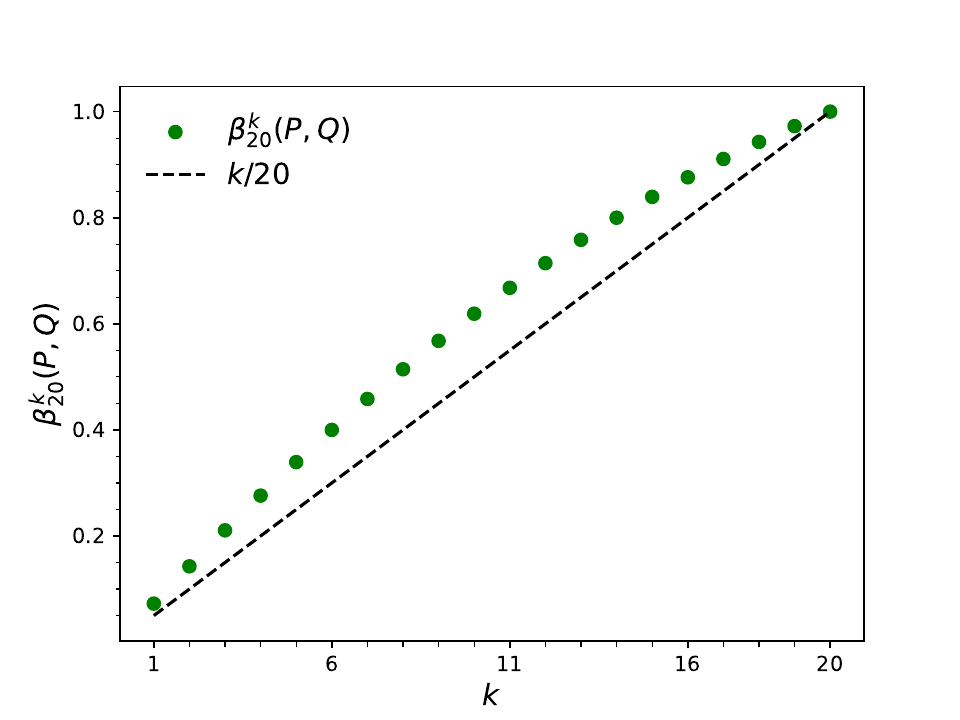}
    \caption{$\beta_n^k(P,Q)$ as a function of $k$ for $n=20$ in Example~\ref{exmp1}}
    \label{fig3}
\end{figure}

The next example involves the Honeyword setting. To model real-world password behavior, we assume a non-uniform distribution $P$ for real passwords. Empirical studies have shown that human-chosen passwords approximately follow a Zipf distribution~\cite{11,12}, where the $r$-th most probable password has probability:
\[
P(pw_r) = \frac{r^{-\alpha}}{\sum_{i=1}^m i^{-\alpha}},
\]
with parameter $0 < \alpha < 1$ and $m = |\mathcal{PW}|$ denoting the size of the password space. This reflects the intuition that a small number of high-ranked passwords dominate the distribution.

\begin{exmp}\label{exmp2}
Let $P$ be a Zipf distribution with parameter $\alpha$ over $[m]$, and let $Q$ be the uniform distribution over $[m]$. Define $S = \sum_{j=1}^m j^{-\alpha} \approx \frac{1}{1 - \alpha} m^{1 - \alpha}$. Then, the Bayesian success probability $\beta_n(P, Q)$ is given by:
\begin{align*}
\beta_n(P, Q) 
&= \frac{1}{n} \sum_t t \left[G^n(t) - G^n(t^-)\right] \\
&= \frac{1}{n} \sum_{i=1}^m m \cdot \frac{(m - i + 1)^{-\alpha}}{S} \left[\left(\frac{i}{m}\right)^n - \left(\frac{i-1}{m}\right)^n\right],
\end{align*}
where we use the fact that under $P$ and $Q$, the likelihood ratio $\frac{P(i)}{Q(i)}$ equals $m \cdot \frac{(m - i + 1)^{-\alpha}}{S}$.

To simplify this expression, we approximate the finite difference by a derivative:
\[
\left(\frac{i}{m}\right)^n - \left(\frac{i-1}{m}\right)^n \approx \frac{n}{m} \left(\frac{i}{m}\right)^{n-1},
\]
and interpret the summation as a Riemann sum:
\begin{align*}
\beta_n(P, Q) 
&\approx \frac{m^{1 - \alpha}}{S} \sum_{i=1}^m \frac{1}{m} \left(1 - \frac{i+1}{m}\right)^{-\alpha} \left(\frac{i}{m}\right)^{n - 1}\\
&\approx \frac{m^{1 - \alpha}}{S} \int_0^1 (1 - t)^{-\alpha} t^{n - 1} \, \mathrm{d}t.
\end{align*}

This integral corresponds to the Beta function:
\[
\beta_n(P, Q) \approx \frac{m^{1 - \alpha}}{S} \cdot B(1 - \alpha, n),
\]
where
\[
B(a, b) = \int_0^1 t^{a - 1} (1 - t)^{b - 1} \, \mathrm{d}t
\]
is the Beta function. Substituting the approximation of $S$, we obtain:
\[
\beta_n(P, Q) \approx (1 - \alpha) \cdot B(1 - \alpha, n).
\]
\end{exmp}

Likewise, we can compute the multi-guess success probability:
\[
\beta_n^k(P, Q) = \sum_{j=1}^k (1 - \alpha) \cdot \binom{n-1}{j-1} \cdot B(j - \alpha, n + 1 - j).
\]

Assuming $\alpha = 0.7$~\cite{21}, the plot of $\beta_n(P, Q)$ versus $n$ is shown in Figure~\ref{fig1}.
The graph of $\beta_n^k(P, Q)$ as a function of $k$ for $n = 20$ is shown in Figure~\ref{fig111}.
Numerical evaluations show that when $P$ is a Zipf distribution with parameter $0.7$ and $Q$ is uniform, the Bayes-optimal adversary achieves a success probability $\beta_n(P, Q) < 0.2$ only if $n \ge 150$. Moreover, for $n = 20$, the success probability of the optimal adversary within the first $k = 3$ guesses, i.e., $\beta_n^3(P, Q)$, already exceeds $0.5$.

Our theoretical results offer quantitative guidance for estimating the attack success probability of a Bayes-optimal adversary in this setting and for selecting the number of decoys necessary to ensure adequate security.

\begin{figure}[t]
    \centering
    \includegraphics[scale=0.45]{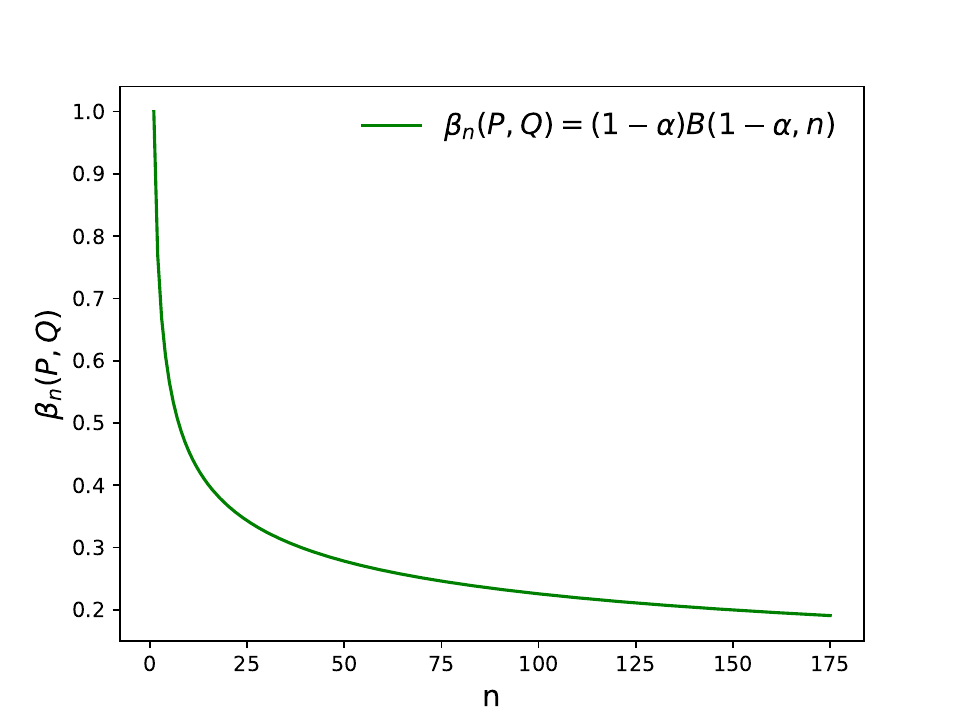}
    \caption{$\beta_n(P,Q)$ vs. $n$ where $P=Zipf(0.7)$ and $Q$ is uniform}
    \label{fig1}
\end{figure}

\begin{figure}[t]
    \centering
    \includegraphics[scale=0.45]{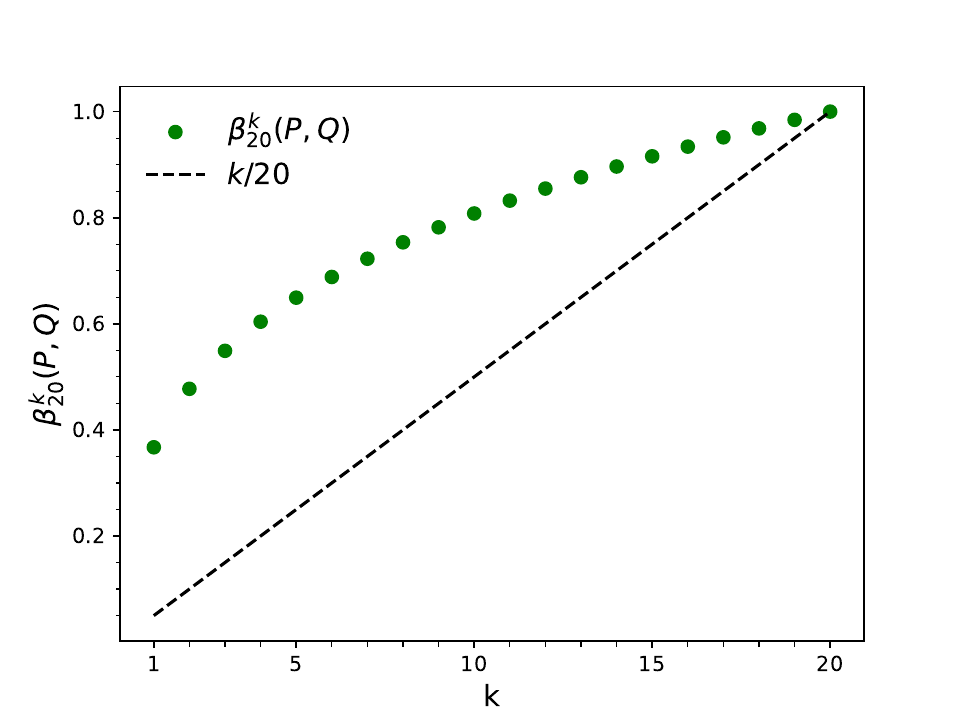}
    \caption{$\beta_{20}^k(P,Q)$ vs. $k$ where $P=Zipf(0.7)$ and $Q$ is uniform}
    \label{fig111}
\end{figure}

\section{Connections Between additive Bayesian advantage and total variation}\label{sec:connections}

\subsection{Main results}
Many cryptographic definitions implicitly rely on distinguishing games between two probability distributions. A classical setting is as follows: the challenger samples $m_0 \sim P$ and $m_1 \sim Q$, selects a random bit $b \in \{0,1\}$, and sends $m_b$ to the adversary, who must then guess the value of $b$ by outputting $b'$. The adversary’s success probability is $\Pr[b' = b]$, and it is well known that the maximal advantage over random guessing is characterized by:
\[
\max_{\mathcal{A}} \Pr[b' = b] = \frac{1}{2} + \frac{1}{2} \Delta(P, Q),
\]
where $\Delta(P, Q) := \frac{1}{2} \sum_{x \in \mathbb{X}} |P(x) - Q(x)|$ denotes the total variation distance between $P$ and $Q$.

Now consider a stronger challenge where both $m_0$ and $m_1$ are given to the adversary, who must decide which one came from $P$ and which from $Q$. This corresponds to our re-identification setting $\mathbf{GuessGame}^{P,Q,n}_{\mathcal{A}}(k)$ with $n = 2$ and $k = 1$. In this case, the optimal success probability is given by $\beta_2(P, Q)$, and the Bayesian advantage becomes $\mathsf{Adv}_2^+(P, Q) = \beta_2(P, Q) - \frac{1}{2}$.

The following theorem establishes a tight relationship between $\mathsf{Adv}_n^+(P, Q)$ and the total variation distance $\Delta(P, Q)$.

\begin{theorem}\label{crypto-game}
For any distributions $P, Q$ and integer $n \ge 1$, the Bayesian advantage satisfies:
\[
\frac{1}{n} \cdot \Delta(P, Q) \le \mathsf{Adv}_n^+(P, Q) \le \Delta(P, Q).
\]
\end{theorem}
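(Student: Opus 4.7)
The plan is to work with the likelihood-ratio representation developed in Section~4.2. From the derivation of Theorem~\ref{theorem2} we have
\[
\mathsf{Adv}_n^{+}(P,Q) = f(+\infty) + \int_0^{M}(t-1)\,g(t)\,G^{n-1}(t)\,dt,
\]
and by unpacking the definition of total variation in terms of the likelihood ratio,
\[
\Delta(P,Q) = f(+\infty) + \int_{1}^{M}(t-1)\,g(t)\,dt.
\]
These two identities make both directions transparent, since the advantage is essentially a $G^{n-1}$-weighted version of $\Delta$.

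The upper bound is the easy direction. I would split the advantage integral at $t=1$: the part over $[0,1]$ has non-positive integrand (since $t-1\le 0$ while $g,G^{n-1}\ge 0$) and can be discarded, while on $[1,M]$ the bound $G^{n-1}(t)\le 1$ yields $\int_1^M (t-1)g(t)G^{n-1}(t)\,dt \le \int_1^M (t-1)g(t)\,dt$. Adding $f(+\infty)$ produces $\mathsf{Adv}_n^+\le \Delta$ in two lines.

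For the lower bound, which is the real obstacle, I would first change variable via the quantile $u = G(t)$, $\phi := G^{-1}$, turning the integral part into $\int_0^1(\phi(u)-1)u^{n-1}\,du$. Since $\phi$ is non-decreasing, there is a crossing point $a := G(1)$ with $\phi\le 1$ on $[0,a]$ and $\phi\ge 1$ on $[a,1]$; short bookkeeping using $\int_0^1 \phi\,du = 1-f(+\infty)$ gives $\int_0^a(1-\phi)\,du = \Delta$ and $\int_a^1(\phi-1)\,du = \Delta - f(+\infty)$. The key step is Chebyshev's sum inequality (integral form) applied separately on each subinterval, using that $u^{n-1}$ is non-decreasing throughout: on $[0,a]$, $1-\phi$ is non-increasing, giving $\int_0^a (1-\phi)u^{n-1}\,du\le \tfrac{a^{n-1}\Delta}{n}$; on $[a,1]$, $\phi-1$ is non-decreasing, giving $\int_a^1 (\phi-1)u^{n-1}\,du\ge \tfrac{(1-a^n)(\Delta-f(+\infty))}{n(1-a)}$. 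Substituting back and using $\tfrac{1-a^n}{1-a}=1+a+\dots+a^{n-1}$, the inequality $\mathsf{Adv}_n^+\ge \Delta/n$ reduces to $(n-T)\,f(+\infty) + (a+a^2+\dots+a^{n-2})\,\Delta \ge 0$ with $T=1+a+\dots+a^{n-1}\le n$, which is manifestly true; in the limit $a\to 0$ and $f(+\infty)=0$, equality is approached, consistent with the claimed tightness.

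The main obstacle is clearly the lower bound: a naive $G^{n-1}(t)\le 1$ argument throws away too much and only produces $\mathsf{Adv}_n^+\ge 0$. Exploiting the monotonicity of $\phi$ against the power weight $u^{n-1}$ via Chebyshev's inequality is what extracts the sharp $1/n$ factor, while the $f(+\infty)$ mass on values outside the support of $Q$ needs only the trivial inequality $f(+\infty)\ge f(+\infty)/n$. The case where $G$ has jumps (discrete $Q$) can be handled either by taking the left-continuous quantile in the change of variable, or by mimicking the argument directly on the discrete form in Theorem~\ref{therm3}.
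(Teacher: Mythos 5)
Your proof is correct (modulo the degenerate case $n=1$, where the lower bound fails for both your argument and the paper's — there $\mathsf{Adv}_1^{+}=0$ while $\Delta$ may be positive — so the interesting range is $n\ge 2$), but it takes a genuinely different route from the paper on both sides. For the lower bound, the paper does not touch the integral representation at all: it exhibits an explicit adversary that picks a random coordinate $r$, outputs $r$ if $P(y_r)\ge Q(y_r)$ and a uniformly random other index otherwise, and a two-line computation shows this strategy already wins with probability $\frac{1}{n}(1+\Delta(P,Q))$. That argument is constructive (it even shows a non-Bayes-optimal, single-coordinate test suffices), whereas your quantile change of variables plus Chebyshev's integral inequality on $[0,a]$ and $[a,1]$ is purely analytic and more technical, though it has the virtue of staying inside the likelihood-ratio framework and of correctly handling ties via the quantile function (your $\int_0^1\phi(u)u^{n-1}\,\mathrm{d}u$ is exactly the tie-aware formula of Theorem~\ref{therm3}). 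For the upper bound the roles reverse: the paper goes through Lemmas~\ref{lemma:intG} and~\ref{lemma:tv_integral}, then lower-bounds $\int_1^M G^n$ via Jensen's inequality followed by Bernoulli's inequality, while your argument — discard the non-positive part of $\int_0^M(t-1)g(t)G^{n-1}(t)\,\mathrm{d}t$ on $[0,1]$ and bound $G^{n-1}\le 1$ on $[1,M]$ — reaches $\mathsf{Adv}_n^{+}\le\Delta$ in two lines and is strictly simpler. Both of your starting identities check out ($\int_0^M g\,G^{n-1}=\frac1n$ gives the first; splitting $\Delta$ over $\{t>1\}$ plus the mass $f(+\infty)$ gives the second), and your final reduction to $(n-T)f(+\infty)+(a+\dots+a^{n-2})\Delta\ge 0$ with $T=1+a+\dots+a^{n-1}\le n$ is valid for $n\ge 2$.
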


The proof is deferred to Appendix~\ref{appendix_2}. The lower bound is established via an explicit construction of an adversary's strategy, whereas the upper bound is more involved and requires several analytical techniques.

\subsection{Tightness analysis}
\textbf{Tightness of the upper bound.}
Consider the distributions $P(0) = p$, $P(1) = 1 - p$ and $Q(0) = 0$, $Q(1) = 1$. In this case, the total variation distance is $\Delta(P, Q) = p$, and a direct computation using Theorem~\ref{therm3} yields:
\[
\beta_n(P, Q) = p + \frac{1 - p}{n},\quad \mathsf{Adv}_n^+(P, Q) = \left(1 - \frac{1}{n}\right)p.
\]
This demonstrates that for any fixed total variation distance $\Delta(P, Q) = p$, there exist distributions $P$ and $Q$ such that:
\[
\mathsf{Adv}_n^+(P, Q) = \left(1 - \frac{1}{n} \right) \Delta(P, Q).
\]
Therefore, the upper bound $\mathsf{Adv}_n^+(P, Q) \le \Delta(P, Q)$ is tight up to a factor of $1/n$, and cannot be improved in general.

\textbf{Tightness of the lower bound.}
Consider the distributions $P(0) = 0$, $P(1) = 1$, and $Q(0) = p$, $Q(1) = 1 - p$. Then the total variation distance is $\Delta(P, Q) = p$. A direct computation shows:
\[
\mathsf{Adv}_n^+(P, Q) = \frac{1}{(1 - p)n}(1 - p^n) - \frac{1}{n} = \frac{p}{(1 - p)n} - \frac{p^n}{(1 - p)n}.
\]
For any fixed $n \ge 2$, as $p \to 0$, we have $\mathsf{Adv}_n^+(P, Q) = \Theta\left( \frac{p}{n} \right)$, which matches the lower bound $\frac{1}{n} \Delta(P, Q)$ up to a vanishing multiplicative factor.

\section{Analysis of Bayesian Advantage in Shuffle DP}

\subsection{Reduction to the basic setting}

As discussed earlier, unlike the basic setting—where the Bayesian attack admits a closed-form analytical expression—the primary challenge in analyzing the shuffle DP model stems from the fact that the output distributions $\mathcal{R}(x_i)$ for users $i = 2, 3, \dots, n$ are generally non-identical.

Fortunately, this difficulty can be circumvented by reducing the shuffle DP analysis to the basic setting through a probabilistic decomposition of the local randomizer $\mathcal{R}$. Specifically, suppose $\mathcal{R}$ satisfies the following mixture structure:
\[
\forall x \in \mathbb{X}:\quad \mathcal{R}(x) = \gamma Q^{\text{com}} + (1 - \gamma) \cdot \text{LO}(x),
\]
where $Q^{\text{com}}$ is a distribution shared across all users (the \emph{common component}), and $\text{LO}(x)$ is an input-dependent \emph{left-over} distribution.

Under this decomposition, the re-identification problem reduces to the basic setting with $P = \mathcal{R}(x_1)$, $Q = Q^{\text{com}}$, and a random number of decoys $N \sim 1 + \mathrm{Bin}(n - 1, \gamma)$, where $\mathrm{Bin}(n - 1, \gamma)$ denotes the binomial distribution with $n - 1$ trials and success probability $\gamma$. This reformulation is formalized in Game~\ref{guess_game3}.

We define:
\begin{align*}
&\psi(\mathcal{R}, n, \gamma, Q^{\text{com}}, x_1) :=\\
&\quad \quad\quad\max_{\mathcal{A}} \Pr\left[\mathbf{GuessGame}^{\mathcal{R}, n, \gamma, Q^{\text{com}}}_{\mathcal{A}}(x_1) = {\mathbf{True}}\right].
\end{align*}

\begin{theorem}\label{main_theorem}
If the local randomizer $\mathcal{R}$ admits a decomposition of the form
\[
\forall x \in \mathbb{X}:\quad \mathcal{R}(x) = \gamma Q^{\text{com}} + (1 - \gamma) \cdot \mathrm{LO}(x),
\]
then
\[
\beta_n(\mathcal{R}, x_1) \le \psi(\mathcal{R}, n, \gamma, Q^{\text{com}}, x_1).
\]
\end{theorem}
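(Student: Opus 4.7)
The plan is to reduce $\mathbf{GuessGame}^{\mathcal{R},n}_{\mathcal{A}}(\boldsymbol{x})$ to the game defining $\psi$ via a revelation argument: I will hand the adversary strictly more information and show the resulting game is exactly the basic setting with $P=\mathcal{R}(x_1)$, $Q=Q^{\mathrm{com}}$, and a random pool size.

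First I would rewrite the sampling of each $y_i=\mathcal{R}(x_i)$ as a two-stage procedure using the mixture decomposition: independently for each $i\in[n]$, draw a coin $b_i\in\{c,\ell\}$ with $\Pr[b_i=c]=\gamma$; conditional on $b_i=c$ set $y_i\sim Q^{\mathrm{com}}$, else $y_i\sim\mathrm{LO}(x_i)$. Since the marginal of $y_i$ is unchanged, the joint law of $(y_1,\dots,y_n)$, and hence of the shuffled vector $\boldsymbol{z}=\boldsymbol{y}_\sigma$, is identical to the original. Next, I would augment the adversary by revealing, for every $i\neq 1$, the coin $b_i$ and, whenever $b_i=\ell$, both the value $y_i$ and its position $\sigma^{-1}(i)$ in the shuffled output. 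Since the original adversary's view is a deterministic function of the augmented view, any strategy $\mathcal{A}$ is implementable by an augmented adversary $\mathcal{A}'$, so $\beta_n(\mathcal{R},x_1)\le \max_{\mathcal{A}'}\Pr[\mathcal{A}'\text{ wins}]$.

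The key step is to identify what the augmented game becomes. Let $K=|\{i\neq 1:b_i=c\}|$ and call a position \emph{unresolved} if it is either $\sigma^{-1}(1)$ or $\sigma^{-1}(i)$ for some $i\neq 1$ with $b_i=c$; there are exactly $K+1$ such positions. I would then verify: (i) $K\sim\mathrm{Bin}(n-1,\gamma)$, independent of everything that has been revealed; (ii) conditional on the revealed information, the restriction of $\sigma$ to the unresolved users is uniform over all $(K+1)!$ bijections onto the unresolved positions, which follows from the exchangeability of a uniform permutation after fixing the assignment on a subset; and (iii) the messages at unresolved positions consist of one sample $y_1\sim \mathcal{R}(x_1)$ together with $K$ i.i.d.\ samples from $Q^{\mathrm{com}}$, mutually independent of each other and of the revealed data. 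These three facts combined show that, conditional on the auxiliary information, the residual task facing $\mathcal{A}'$ is precisely $\mathbf{GuessGame}^{\mathcal{R}(x_1),\,Q^{\mathrm{com}},\,K+1}(1)$ from Section~\ref{sec_definition_basic}.

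Taking expectation over the auxiliary information, the optimal augmented success probability equals $\mathbb{E}_{K\sim\mathrm{Bin}(n-1,\gamma)}\!\left[\beta_{K+1}\!\left(\mathcal{R}(x_1),Q^{\mathrm{com}}\right)\right]$, which is exactly $\psi(\mathcal{R},n,\gamma,Q^{\mathrm{com}},x_1)$ by the definition of Game~\ref{guess_game3}; chaining with the revelation inequality yields the theorem. The main obstacle I anticipate is the careful justification of step (ii): I must argue that revealing the identities of the leftover-users' positions (and their values) leaves the induced permutation on the unresolved users uniformly distributed, and that this uniformity holds jointly with the conditional independence of the unresolved messages. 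I expect this to follow cleanly by conditioning first on the coin vector $(b_1,\dots,b_n)$ and the leftover values, under which the uniform permutation $\sigma$ factors into a uniform assignment of the leftover-user set to a uniform subset of $[n]$ and an independent uniform assignment of the unresolved users to the complementary positions, so the argument is conceptually routine but deserves a clean write-up.
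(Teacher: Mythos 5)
Your proposal is correct and follows essentially the same route as the paper: rewrite each $\mathcal{R}(x_i)$ via the mixture coins, reveal the positions of the leftover samples to a strictly stronger adversary, and observe that the residual task is exactly the basic game with $P=\mathcal{R}(x_1)$, $Q=Q^{\mathrm{com}}$, and a $1+\mathrm{Bin}(n-1,\gamma)$ pool size, i.e., Game~\ref{guess_game3}. Your points (i)--(iii) in fact spell out the conditional-independence and conditional-uniformity claims that the paper's hybrid argument asserts more tersely, so no gap remains.
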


\begin{proof}
Let $\boldsymbol{x}_{-1} = (x_2, x_3, \dots, x_n)$ denote the inputs of users $2$ to $n$, and let $\boldsymbol{y} = (y_1, \dots, y_n)$ with $y_i \sim \mathcal{R}(x_i)$ be their corresponding outputs. Define a binary vector $\boldsymbol{b} = (b_2, \dots, b_n) \in \{0,1\}^{n-1}$ such that $b_i = 1$ indicates $y_i \sim Q^{\text{com}}$, and $b_i = 0$ indicates $y_i \sim \text{LO}(x_i)$. Let $I = \{ i \in [2,n] \mid b_i = 0 \}$ be the index set of outputs not sampled from $Q^{\text{com}}$.

Game~\ref{guess_game4} corresponds to a variant of Game~\ref{guess_game2} in which the adversary is additionally provided the positions $\sigma^{-1}(I)$ of samples drawn from the non-common distributions $\text{LO}(x_i)$. We proceed by a hybrid argument. If the adversary is given $\sigma^{-1}(I)$, then the optimal strategy reduces to that in Game~\ref{guess_game3}. If not, the adversary operates as in Game~\ref{guess_game2}.

Let $\mathcal{A}$ be any adversary in the original game $\mathbf{GuessGame}^{\mathcal{R}, n}_{\mathcal{A}}(\boldsymbol{x})$. We define a new adversary $\mathcal{B}$ for Game~\ref{guess_game4} that ignores the extra information $\sigma^{-1}(I)$ and simply invokes $\mathcal{A}$ on the permuted outputs $\boldsymbol{y}_\sigma$. Since $\mathcal{A}$ does not utilize the additional information, the success probability remains unchanged. Hence,
\begin{align*}
\forall \mathcal{A},\exists \mathcal{B}, \forall \boldsymbol{x}&\in \{\boldsymbol{x}\in \mathbb{X}^n \mid\boldsymbol{x}[1]=x_1\},\\
\Pr&\left[\mathbf{GuessGame}^{\mathcal{R}, n}_{\mathcal{A}}(\boldsymbol{x}) = {\mathbf{True}}\right]\\
&=\Pr\left[\mathbf{GuessGame2}^{\mathcal{R}, n, \gamma, Q^{\text{com}}}_{\mathcal{B}}(x_1) = {\mathbf{True}}\right].
\end{align*}
Taking the maximum over all adversaries $\mathcal{A}$ and corresponding $\mathcal{B}$, we obtain:
\[
\beta_n(\mathcal{R}, x_1) \le \psi(\mathcal{R}, n, \gamma, Q^{\text{com}}, x_1). \qedhere
\]
\end{proof}

\renewcommand{\algorithmcfname}{Game}
\begin{algorithm}[t]
$y_1 \gets_{\mathcal{R}(x_1)} \mathbb{Y}$\\
$N \sim 1+\mathrm{Bin}(n-1,\gamma)$\\
$y_2, y_3, \dots, y_N \gets_{Q^{\text{com}}} \mathbb{Y}$\\
$\boldsymbol{y} \gets (y_1, y_2, \dots, y_N)$\\
$\sigma \gets_{\$} \mathcal{P}_N$ \tcp{Sample a random permutation}
$\boldsymbol{y}_\sigma \gets (y_{\sigma(1)}, y_{\sigma(2)}, \dots, y_{\sigma(N)})$\\
$g \gets \mathcal{A}(\boldsymbol{y}_\sigma)$\\
\textbf{return} $\sigma^{-1}(1) = g$
\caption{$\mathbf{GuessGame}^{\mathcal{R},n,\gamma,Q^{\text{com}}}_{\mathcal{A}}(x_1)$}
\label{guess_game3}
\end{algorithm}

\renewcommand{\algorithmcfname}{Game}
\begin{algorithm}[t]
$y_1 \gets_{\mathcal{R}(x_1)} \mathbb{Y}$\\
$I \gets \emptyset$\\
\For{$i = 2$ \textbf{to} $n$}{
    $b_i \sim \mathrm{Bern}(\gamma)$\\
    \If{$b_i = 1$}{
        $y_i \gets_{Q^{\text{com}}} \mathbb{Y}$
    }
    \Else{
        $y_i \gets_{\text{LO}(x_i)} \mathbb{Y}$; \quad $I \gets I \cup \{i\}$
    }
}
$\boldsymbol{y} \gets (y_1, y_2, \dots, y_n)$\\
$\sigma \gets_{\$} \mathcal{P}_n$ \tcp{Sample a random permutation}
$\boldsymbol{y}_\sigma \gets (y_{\sigma(1)}, y_{\sigma(2)}, \dots, y_{\sigma(n)})$\\
$g \gets \mathcal{A}(\boldsymbol{y}_\sigma, \sigma^{-1}(I))$\\
\textbf{return} $\sigma^{-1}(1)=g$
\caption{$\mathbf{GuessGame2}^{\mathcal{R},n,\gamma,Q^{\text{com}}}_{\mathcal{A}}(x_1)$}
\label{guess_game4}
\end{algorithm}

\begin{theorem}\label{theorem_compute_psi}
The Bayesian success probability in the reduced game satisfies:
\begin{align*}
&\psi(\mathcal{R}, n, \gamma, Q^{\text{com}}, x_1) \\
&= \sum_{m=0}^{n-1} \binom{n-1}{m} \gamma^m (1 - \gamma)^{n-1 - m} \cdot \beta_{m+1}(\mathcal{R}(x_1), Q^{\text{com}}).
\end{align*}
\end{theorem}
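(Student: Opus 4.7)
The plan is to prove \Cref{theorem_compute_psi} by conditioning on the value of $N$ and reducing to the basic setting. The crucial observation is that in Game~\ref{guess_game3} the adversary sees the shuffled vector $\boldsymbol{y}_\sigma$, whose length equals $N$. Hence $N$ is a fully observable quantity to $\mathcal{A}$, and an optimal adversary may freely specialize its strategy to each value of $N$. This lets us split the success probability across the values of $N$ by the law of total probability.

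First I would note that, by construction, $N = 1 + B$ where $B \sim \mathrm{Bin}(n-1,\gamma)$, so that for each $m \in \{0,1,\dots,n-1\}$,
\[
\Pr[N = m+1] = \binom{n-1}{m}\gamma^{m}(1-\gamma)^{n-1-m}.
\]
Next I would argue that conditioned on $N = m+1$, the conditional experiment becomes exactly an instance of the basic setting from Section~\ref{sec_definition_basic}: the vector $(y_1,y_2,\dots,y_{m+1})$ consists of one sample drawn from $P := \mathcal{R}(x_1)$ and $m$ independent samples drawn from $Q := Q^{\text{com}}$, and is then uniformly permuted. The adversary's goal, recovering $\sigma^{-1}(1)$, is identical to that in $\mathbf{GuessGame}^{P,Q,m+1}_{\mathcal{A}}(1)$. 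Therefore the maximum conditional success probability is exactly $\beta_{m+1}(\mathcal{R}(x_1), Q^{\text{com}})$, attained by the Bayes-optimal strategy of Section~\ref{sec:bayesian_attack}.

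Putting the pieces together, I would write any adversary as a family $\{\mathcal{A}_m\}_{m=0}^{n-1}$, where $\mathcal{A}_m$ is the strategy invoked upon observing a vector of length $m+1$. By the law of total probability,
\[
\Pr[\mathcal{A}\text{ wins}] = \sum_{m=0}^{n-1}\Pr[N=m+1]\cdot\Pr[\mathcal{A}_m\text{ wins}\mid N=m+1].
\]
Taking the supremum over $\mathcal{A}$ is equivalent to taking, for each $m$ independently, the supremum over $\mathcal{A}_m$, because the $\mathcal{A}_m$'s can be chosen separately (the branches of the strategy on different observation lengths do not interact). This independence of optimization across $m$ is the only non-trivial step, and it is justified precisely by $N$ being observable.

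Substituting the binomial weights and the basic-setting optimum then yields
\[
\psi(\mathcal{R},n,\gamma,Q^{\text{com}},x_1) = \sum_{m=0}^{n-1}\binom{n-1}{m}\gamma^{m}(1-\gamma)^{n-1-m}\,\beta_{m+1}\!\bigl(\mathcal{R}(x_1),Q^{\text{com}}\bigr),
\]
which is the claimed identity. No serious obstacle is anticipated; the only care needed is the decomposition-of-adversaries argument, for which it is important that the length of the observed vector fully determines which ``branch'' of the strategy is used, so that no information is lost in treating the per-$m$ optima as independent.
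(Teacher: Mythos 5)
Your proposal is correct and follows essentially the same route as the paper: condition on the binomial number of decoys, note that each conditional experiment is exactly the basic setting with success probability $\beta_{m+1}(\mathcal{R}(x_1), Q^{\text{com}})$, and take the expectation over the binomial weights. Your explicit justification that the per-$m$ optimizations decouple because $N$ is observable from the length of $\boldsymbol{y}_\sigma$ is a welcome elaboration of a step the paper leaves implicit, but it is not a different argument.
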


\begin{proof}
This follows directly from the definition of $\psi(\cdot)$ in Game~\ref{guess_game3} and the definition of $\beta_n(P, Q)$.

In Game~\ref{guess_game3}, the total number of samples drawn from $Q^{\text{com}}$ (excluding the one from $P = \mathcal{R}(x_1)$) is a random variable $m \sim \mathrm{Bin}(n-1, \gamma)$. For each such $m$, the attacker's success probability is $\beta_{m+1}(\mathcal{R}(x_1), Q^{\text{com}})$, since there are $m$ decoys and one target. Taking the expectation over the binomial distribution gives the result.
\end{proof}

\begin{figure}
    \centering
    \includegraphics[width=0.9\linewidth]{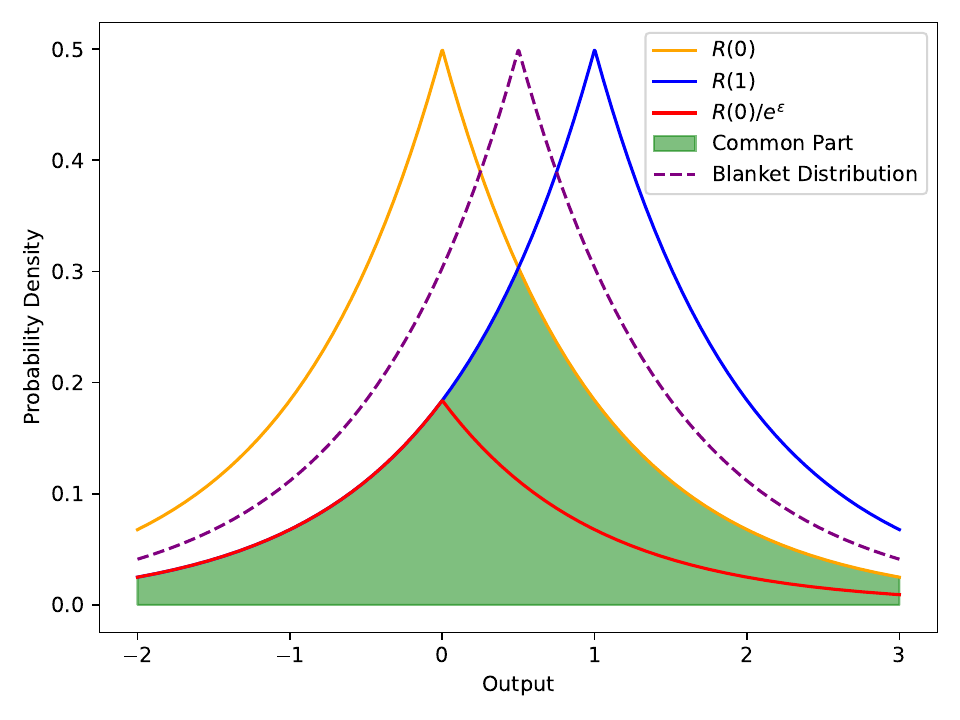}
    \caption{Decomposition methods for $1$-DP Laplace mechanism \cite{Dwork2006} on $\{0,1\}$}
    \label{fig:decomposition}
\end{figure}

\textbf{Possible decompositions.} The decomposition allowed by Theorem~\ref{main_theorem} admits multiple choices, among which the most commonly used are the \emph{clone} and \emph{blanket} decompositions, which we introduce separately below. For every $\varepsilon$-DP local randomizer, both of these decompositions are guaranteed to exist.

For an $(\varepsilon,\delta)$-DP local randomizer $\mathcal{R}$, the clone decomposition may fail to exist (i.e., $\gamma = 0$ when $Q^{\text{com}} = \mathcal{R}(x_1)$). The blanket decomposition, on the other hand, always exists but generally lacks a well-structured form, which introduces additional challenges. We leave the analysis of this scenario as future work.

\subsection{Analysis via the clone decomposition}
The \emph{clone} technique, originally proposed in~\cite{Feldman2021} for analyzing privacy amplification in shuffle DP, can also be applied to the analysis of re-identification attacks in the shuffle model. 
Specifically, for any $\varepsilon$-DP local randomizer $\mathcal{R}$, the clone method ensures the following decomposition:
\[
\forall x \in \mathbb{X}, \quad \mathcal{R}(x) = e^{-\varepsilon} \cdot \mathcal{R}(x_1) + (1 - e^{-\varepsilon}) \cdot \mathrm{LO}(x).
\]
This decomposition follows directly from the $\varepsilon$-DP property, which implies that for all $x \in \mathbb{X}$ and $y \in \mathbb{Y}$,
\[
\Pr[\mathcal{R}(x)=y] \ge e^{-\varepsilon} \cdot\Pr[\mathcal{R}(x_1)=y],
\]
thereby ensuring the existence of $\mathrm{LO}(x_i)$ as a valid residual distribution.

\begin{theorem}\label{theorem_clone}
Let $\mathcal{R}$ be an $\varepsilon$-differentially private local randomizer. Then, in the single-message shuffle model, the Bayesian success probability of any re-identification attack (as defined in Game~\ref{guess_game2}) satisfies:
\[
\beta_n(\mathcal{R}) \le \frac{e^\varepsilon}{n}.
\]
\end{theorem}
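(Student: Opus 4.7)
The plan is to combine the clone decomposition with the reduction-to-basic-setting machinery (Theorem~\ref{main_theorem}) and the closed-form expression for $\psi$ (Theorem~\ref{theorem_compute_psi}), and then exploit the fact that the clone decomposition chooses the common component to coincide with $\mathcal{R}(x_1)$ itself, which collapses every inner basic-setting probability $\beta_{m+1}(\mathcal{R}(x_1),Q^{\text{com}})$ to the trivial value $1/(m+1)$ supplied by Corollary~\ref{corollary1}.

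First I would fix an arbitrary $x_1 \in \mathbb{X}$ and observe that $\varepsilon$-DP implies $\Pr[\mathcal{R}(x)=y] \ge e^{-\varepsilon}\Pr[\mathcal{R}(x_1)=y]$ for all $x \in \mathbb{X}$ and $y \in \mathbb{Y}$, so there is a valid residual distribution $\mathrm{LO}(x)$ giving
\[
\mathcal{R}(x) = e^{-\varepsilon}\cdot \mathcal{R}(x_1) + (1-e^{-\varepsilon})\cdot \mathrm{LO}(x).
\]
This instantiates the hypothesis of Theorem~\ref{main_theorem} with $\gamma = e^{-\varepsilon}$ and $Q^{\text{com}} = \mathcal{R}(x_1)$, yielding $\beta_n(\mathcal{R},x_1) \le \psi(\mathcal{R},n,e^{-\varepsilon},\mathcal{R}(x_1),x_1)$.

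Second I would apply Theorem~\ref{theorem_compute_psi}. The key simplification is that, because $Q^{\text{com}}$ has been set to $\mathcal{R}(x_1)$, Corollary~\ref{corollary1} gives $\beta_{m+1}(\mathcal{R}(x_1),Q^{\text{com}}) = 1/(m+1)$ for every $m$, so the formula becomes
\[
\psi = \sum_{m=0}^{n-1}\binom{n-1}{m} e^{-\varepsilon m}(1-e^{-\varepsilon})^{n-1-m}\cdot \frac{1}{m+1}.
\]
Using the identity $\binom{n-1}{m}/(m+1) = \binom{n}{m+1}/n$ and reindexing with $k = m+1$, this telescopes into a single binomial expansion:
\[
\psi = \frac{e^{\varepsilon}}{n}\sum_{k=1}^{n}\binom{n}{k} e^{-\varepsilon k}(1-e^{-\varepsilon})^{n-k} = \frac{e^{\varepsilon}}{n}\bigl(1-(1-e^{-\varepsilon})^n\bigr) \le \frac{e^{\varepsilon}}{n}.
\]

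Since the bound holds uniformly in $x_1$, taking the maximum over $x_1 \in \mathbb{X}$ yields the claim $\beta_n(\mathcal{R}) \le e^{\varepsilon}/n$. I do not anticipate a serious obstacle: all the heavy lifting has been done by Theorems~\ref{main_theorem} and~\ref{theorem_compute_psi}, and the only genuinely creative move is the choice $Q^{\text{com}} = \mathcal{R}(x_1)$ inside the clone decomposition, which is precisely what forces each inner $\beta_{m+1}$ to collapse to $1/(m+1)$ rather than having to be estimated separately. The rest is a routine binomial identity.
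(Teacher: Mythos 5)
Your proposal is correct and matches the paper's proof essentially verbatim: both instantiate Theorem~\ref{main_theorem} with the clone decomposition $\gamma = e^{-\varepsilon}$, $Q^{\text{com}} = \mathcal{R}(x_1)$, invoke Corollary~\ref{corollary1} to collapse each $\beta_{m+1}$ to $1/(m+1)$, and finish with the identity $\binom{n-1}{m}/(m+1) = \binom{n}{m+1}/n$ and the binomial theorem to obtain $\frac{1}{\gamma n}\bigl[1-(1-\gamma)^n\bigr] \le e^{\varepsilon}/n$.
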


\begin{proof}
By Theorem~\ref{main_theorem}, using the \emph{clone decomposition}, we have:
\[
\beta_n(\mathcal{R}, x_1) \le \psi(\mathcal{R}, n, \gamma, \mathcal{R}(x_1), x_1),
\]
where we define $\gamma = e^{-\varepsilon}$. From Theorem~\ref{theorem_compute_psi}, and noting that $\beta_{m+1}(\mathcal{R}(x_1), \mathcal{R}(x_1)) = \frac{1}{m+1}$ (by Corollary~\ref{corollary1}), it follows that:
\begin{align*}
\psi(\mathcal{R},& n, \gamma, \mathcal{R}(x_1), x_1) \\
&= \sum_{m=0}^{n-1} \binom{n-1}{m} \gamma^m (1 - \gamma)^{n - 1 - m} \cdot \frac{1}{m + 1} \\
&= \frac{1}{\gamma n} \sum_{m=0}^{n-1} \binom{n}{m + 1} \gamma^{m + 1} (1 - \gamma)^{n - 1 - m} \\
&= \frac{1}{\gamma n} \left[1 - (1 - \gamma)^n\right] \\
&\le \frac{1}{\gamma n} = \frac{e^\varepsilon}{n}.
\end{align*}
This concludes the proof.
\end{proof}

\subsection{Analysis via the blanket decomposition}\label{sec_blanket}

Another form of decomposition in the shuffle DP setting is the \emph{blanket decomposition}, which extracts the \emph{maximum} common component across all $\mathcal{R}(x_i)$ \cite{Balle2019}. Specifically, the blanket distribution $Q^{\mathrm{B}}$ is defined pointwise as:
\[
Q^{\mathrm{B}}(y) := \frac{1}{\alpha} \cdot \inf_{x \in \mathbb{X}} \Pr[\mathcal{R}(x) = y], \
\alpha := \sum_{y \in \mathbb{Y}} \inf_{x \in \mathbb{X}} \Pr[\mathcal{R}(x) = y].
\]
This yields a valid convex decomposition:
\[
\mathcal{R}(x) = \alpha \cdot Q^{\mathrm{B}} + (1 - \alpha) \cdot \mathrm{LO}^{\mathrm{B}}(x),
\]
where $\mathrm{LO}^{\mathrm{B}}(x)$ is a residual distribution specific to each input $x$.

The blanket decomposition is known to be optimal in deriving privacy amplification bounds under shuffle DP~\cite{su2025decompositionbasedoptimalboundsprivacy}. We now demonstrate that it is also optimal for upper bounding the success probability of re-identification attacks among all decomposition-based methods.

\begin{theorem}\label{therm_optimal}
Let $Q^{\mathrm{B}}$ and $\alpha$ be the blanket distribution and its associated mass coefficient for local randomizer $\mathcal{R}$. If $\mathcal{R}$ also admits a decomposition:
\[
\forall x\in \mathbb{X}: \quad \mathcal{R}(x) = \gamma Q^{\text{com}} + (1 - \gamma) \mathrm{LO}(x_i),
\]
then
\[
\beta_n(\mathcal{R}, x_1) \le \psi(\mathcal{R}, n, \alpha, Q^{\mathrm{B}}, x_1) \le \psi(\mathcal{R}, n, \gamma, Q^{\text{com}}, x_1).
\]
\end{theorem}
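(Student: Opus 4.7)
The first inequality $\beta_n(\mathcal{R}, x_1) \le \psi(\mathcal{R}, n, \alpha, Q^{\mathrm{B}}, x_1)$ follows immediately from Theorem~\ref{main_theorem} applied with the blanket decomposition, so the entire work lies in establishing the second inequality. The plan is to build a common refinement that realizes both decompositions simultaneously, and then compare the two corresponding instantiations of Game~\ref{guess_game4} via a data-processing style hybrid argument.

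The starting observation is that for every $y$,
\[
\alpha Q^{\mathrm{B}}(y) \;=\; \inf_{x \in \mathbb{X}} \Pr[\mathcal{R}(x) = y] \;\ge\; \gamma Q^{\text{com}}(y),
\]
since any valid decomposition $\mathcal{R}(x) = \gamma Q^{\text{com}} + (1-\gamma)\mathrm{LO}(x)$ forces $\Pr[\mathcal{R}(x) = y] \ge \gamma Q^{\text{com}}(y)$ pointwise. When $\alpha > \gamma$, this lets us define a valid probability distribution
\[
R(y) \;:=\; \frac{\alpha Q^{\mathrm{B}}(y) - \gamma Q^{\text{com}}(y)}{\alpha - \gamma},
\]
and combining with the blanket decomposition yields a unified three-component mixture
\[
\mathcal{R}(x) \;=\; \gamma Q^{\text{com}} + (\alpha-\gamma)\, R + (1-\alpha)\,\mathrm{LO}^{\mathrm{B}}(x).
\]
The blanket decomposition is recovered by lumping the $Q^{\text{com}}$- and $R$-components together, while the assumed $(\gamma, Q^{\text{com}})$ decomposition is recovered by lumping the $R$- and $\mathrm{LO}^{\mathrm{B}}$-components together. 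The degenerate case $\alpha = \gamma$ forces $Q^{\mathrm{B}} = Q^{\text{com}}$ by matching total masses, so the two decompositions coincide and there is nothing to prove.

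With this common refinement in place, I will sample each user's output by first drawing a ternary label $B_i \in \{\text{com}, R, \text{LO}\}$ with probabilities $(\gamma, \alpha-\gamma, 1-\alpha)$ and then drawing $y_i$ from $Q^{\text{com}}$, $R$, or $\mathrm{LO}^{\mathrm{B}}(x_i)$ accordingly. Consider the \emph{enhanced blanket game}, in which the adversary sees the shuffled output together with both the $\mathrm{LO}^{\mathrm{B}}$-positions and the $R$-positions, as opposed to only the $\mathrm{LO}^{\mathrm{B}}$-positions in the blanket instance of Game~\ref{guess_game4}. Revealing more information can only raise the Bayes-optimal success probability, so the enhanced-game optimum is at least $\psi(\mathcal{R}, n, \alpha, Q^{\mathrm{B}}, x_1)$. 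On the other hand, the enhanced blanket game and the $(\gamma, Q^{\text{com}})$ instance of Game~\ref{guess_game4} present the Bayes-optimal adversary with the \emph{same} strategic content: both pin down the subset of positions containing $y_1$ together with $\mathrm{Bin}(n-1, \gamma)$ i.i.d.\ samples from $Q^{\text{com}}$, and the extra ability to distinguish $R$- from $\mathrm{LO}^{\mathrm{B}}$-positions is useless because the values at those positions are drawn from distributions that do not depend on $x_1$ or on the hidden index $\sigma^{-1}(1)$. Chaining the two comparisons then yields $\psi(\mathcal{R}, n, \alpha, Q^{\mathrm{B}}, x_1) \le \psi(\mathcal{R}, n, \gamma, Q^{\text{com}}, x_1)$.

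The main obstacle I anticipate is rigorously justifying the ``strategic equivalence'' step above: once the $R$- and $\mathrm{LO}^{\mathrm{B}}$-positions are pinned down, their sample values must be shown to contribute nothing to the posterior over $\sigma^{-1}(1)$. This is not a pure ``more information can only help'' application; it requires a Bayes-rule calculation analogous to Equations~\eqref{eq1}--\eqref{eq5}, in which the $R$- and $\mathrm{LO}^{\mathrm{B}}$-sample values factor out of the posterior because, conditional on the labels, their distributions are identical regardless of the unknown target index. Once this factorization is written out, the remaining steps reduce to the monotonicity principle already used in the proof of Theorem~\ref{main_theorem}.
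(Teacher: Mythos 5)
Your proposal is correct and matches the paper's argument essentially step for step: your residual distribution $R$ is exactly the paper's $Q^{\ell}$ arising from $Q^{\mathrm{B}} = \frac{\gamma}{\alpha}Q^{\text{com}} + (1-\frac{\gamma}{\alpha})Q^{\ell}$, and your ``enhanced blanket game'' is the paper's Game~\ref{guess_game5}, compared to the two $\psi$-games by the same reveal-more-information hybrid. The ``strategic equivalence'' step you flag as the main obstacle is indeed needed (the revealed $R$-positions carry values independent of $\sigma^{-1}(1)$, so they factor out of the posterior); the paper asserts this without spelling out the Bayes calculation, so your plan is, if anything, slightly more careful on that point.
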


\begin{proof}
The first inequality follows directly from Theorem~\ref{main_theorem} by choosing $\gamma = \alpha$ and $Q^{\text{com}} = Q^{\mathrm{B}}$.

To prove the second inequality, we use a hybrid argument. Since $\Pr[\mathcal{R}(x)=y] \ge \gamma Q^{\text{com}}(y)$ for all $x$ and $y$ by assumption, we have:
\[
\inf_{x} \Pr[\mathcal{R}(x) = y] \ge \gamma Q^{\text{com}}(y) \quad \Rightarrow \quad \alpha Q^{\mathrm{B}}(y) \ge \gamma Q^{\text{com}}(y).
\]
This implies a decomposition of $Q^{\mathrm{B}}$:
\[
Q^{\mathrm{B}} = \frac{\gamma}{\alpha} Q^{\text{com}} + \left(1 - \frac{\gamma}{\alpha} \right) Q^{\ell},
\]
for some distribution $Q^{\ell}$.

In Game~\ref{guess_game5}, users $i \ge 2$ sample from $Q^{\text{com}}$ with probability $\gamma$, and from $Q^{\ell}$ with probability $\alpha - \gamma$, which together form $Q^{\mathrm{B}}$ used with probability $\alpha$ overall. The game also reveals the shuffled positions $\sigma^{-1}(I)$ of users who sampled from $Q^{\ell}$, allowing the adversary to filter out those entries and behave identically to Game~\ref{guess_game3}.

If this additional information is hidden, the adversary is effectively operating in the game 
\[
\mathbf{GuessGame}^{\mathcal{R}, n, \alpha, Q^{\mathrm{B}}}_{\mathcal{A}}(x_1).
\]
Thus, any attacker for the blanket game can be simulated in Game~\ref{guess_game5} (by ignoring $\sigma^{-1}(I)$), while the converse is not true. Therefore:
\[
\psi(\mathcal{R}, n, \alpha, Q^{\mathrm{B}}, x_1) \le \psi(\mathcal{R}, n, \gamma, Q^{\text{com}}, x_1). \qedhere
\]
\end{proof}

\renewcommand{\algorithmcfname}{Game}
\begin{algorithm}[t]
Compute $Q^{\mathrm{B}},\alpha$ of $\mathcal{R}$\\
$y_1 \gets_{\mathcal{R}(x_1)} \mathbb{Y}$\\
$N \sim 1+\mathrm{Bin}(n-1,\alpha)$\\
$I\gets \emptyset$\\
\For{$i=2$ to $N$}{
    $c_i\sim \mathrm{Bern}(\frac{\gamma}{\alpha})$\\
    \If{$c_i=1$}{
        $y_i\gets_{Q^{\text{com}}} \mathbb{Y}$
    }
    \Else{
        $y_i\gets_{Q^{l}} \mathbb{Y}$\\
        $I\gets I \cup \{i\}$\\
    }
}
$\boldsymbol{y} \gets (y_1, y_2, \dots, y_N)$\\
$\sigma \gets_{\$} \mathcal{P}_N$ \tcp{Sample a random permutation}
$\boldsymbol{y}_\sigma \gets (y_{\sigma(1)}, y_{\sigma(2)}, \dots, y_{\sigma(N)})$\\
$g \gets \mathcal{A}(\boldsymbol{y}_\sigma,\sigma^{-1}(I))$\\
\textbf{return} $\sigma^{-1}(1) = g$
\caption{$\mathbf{GuessGame3}^{\mathcal{R},n,\gamma,Q^{\text{com}}}_{\mathcal{A}}(x_1)$}
\label{guess_game5}
\end{algorithm}

\begin{theorem}\label{theorem_mba}
Let $\mathcal{R}$ be a local randomizer in the shuffle DP model, and fix an input $x_1 \in \mathbb{X}$. Define
\[
M := \sup_{y} \frac{\Pr[\mathcal{R}(x_1) = y]}{\inf_{x \in \mathbb{X}} \Pr[\mathcal{R}(x) = y]}.
\]
Then the multiplicative Bayesian advantage satisfies:
\begin{align*}
\forall n \in \mathbb{N},\ \mathsf{Adv}_n^{\times}(\mathcal{R}, x_1) \le M, \quad
\text{and} \lim_{n \to \infty} \mathsf{Adv}_n^{\times}(\mathcal{R}, x_1) = M.
\end{align*}
\end{theorem}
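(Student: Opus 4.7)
The plan is to obtain the upper bound by reducing to the basic setting via the blanket decomposition, and to obtain the matching asymptotic by constructing near-worst-case decoy inputs. First I would invoke Theorem~\ref{main_theorem} with the blanket decomposition $\mathcal{R}(x)=\alpha Q^{\mathrm{B}}+(1-\alpha)\mathrm{LO}^{\mathrm{B}}(x)$, giving $\beta_n(\mathcal{R},x_1)\le \psi(\mathcal{R},n,\alpha,Q^{\mathrm{B}},x_1)$; by Theorem~\ref{theorem_compute_psi} this expands into a binomial average of basic-setting Bayesian probabilities $\beta_{m+1}(P,Q^{\mathrm{B}})$, with $P=\mathcal{R}(x_1)$. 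The optimality statement of Theorem~\ref{therm_optimal} justifies that using the blanket is the right choice for extracting the tightest decomposition-based bound.

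For the uniform bound $\mathsf{Adv}_n^{\times}\le M$, I would treat $M=\infty$ as trivial and focus on $M<\infty$. A key observation is that $M<\infty$ forces $\inf_{x}\Pr[\mathcal{R}(x)=y]>0$ on the support of $P$, so $f(+\infty)=\Pr_{y\sim P}[Q^{\mathrm{B}}(y)=0]=0$, and the likelihood-ratio supremum $M'=\sup_{y:Q^{\mathrm{B}}(y)\neq 0}P(y)/Q^{\mathrm{B}}(y)$ rescales cleanly to $M'=\alpha M$ by the definition $Q^{\mathrm{B}}(y)=(1/\alpha)\inf_{x}\Pr[\mathcal{R}(x)=y]$. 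Then Theorem~\ref{theorem2} gives $\beta_{m+1}(P,Q^{\mathrm{B}})\le \alpha M/(m+1)$. Using the combinatorial identity $\binom{n-1}{m}/(m+1)=\binom{n}{m+1}/n$, the binomial average collapses:
\[
\psi(\mathcal{R},n,\alpha,Q^{\mathrm{B}},x_1)\le \frac{M}{n}\sum_{k=1}^{n}\binom{n}{k}\alpha^{k}(1-\alpha)^{n-k}=\frac{M}{n}\bigl(1-(1-\alpha)^{n}\bigr)\le \frac{M}{n},
\]
yielding $\mathsf{Adv}_n^{\times}(\mathcal{R},x_1)\le M$ for all $n$, and incidentally already $\mathsf{Adv}_n^{\times}\to M$ from above as $n\to\infty$ whenever $\alpha>0$.

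For the matching liminf, the idea is to exhibit a concrete worst-case configuration. For any $\epsilon>0$ I would pick $y^{\epsilon},x^{\epsilon}$ with $P(y^{\epsilon})/\Pr[\mathcal{R}(x^{\epsilon})=y^{\epsilon}]>M-\epsilon$, then set $x_{2}=\cdots=x_{n}=x^{\epsilon}$; since $\beta_n(\mathcal{R},x_1)$ is defined as a supremum over input vectors, this immediately lower-bounds it by the basic-setting quantity $\beta_n(P,\mathcal{R}(x^{\epsilon}))$. When $M<\infty$ the decoy distribution $\mathcal{R}(x^{\epsilon})$ has full support on $\mathrm{supp}(P)$, so Theorem~\ref{theorem:asymptotic} gives $n\,\beta_n(P,\mathcal{R}(x^{\epsilon}))\to M_{P,\mathcal{R}(x^{\epsilon})}\ge M-\epsilon$; sending $\epsilon\to 0$ yields $\liminf_{n}\mathsf{Adv}_n^{\times}(\mathcal{R},x_1)\ge M$. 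When $M=\infty$, the same construction forces either $M_{P,\mathcal{R}(x^{\epsilon})}$ to diverge or $f_{P,\mathcal{R}(x^{\epsilon})}(+\infty)>0$, making $n\,\beta_n\to\infty$.

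The main obstacle I anticipate is not the binomial manipulation but the careful bookkeeping of three intertwined quantities: the global re-identification parameter $M$ (defined with $\inf_x\Pr[\mathcal{R}(x)=y]$ in the denominator), the basic-setting likelihood-ratio supremum $M'=\alpha M$ associated with the blanket, and the individual-decoy supremum $M_{P,\mathcal{R}(x^{\epsilon})}$ used for the lower bound. Special care is needed when $M$ is infinite or when the sup defining $M$ is not attained, so both directions should be phrased via approximating sequences rather than maximizers. I also need to ensure that the $o(1/n)$ remainder in Theorem~\ref{theorem:asymptotic} is applied at a \emph{fixed} choice of $x^{\epsilon}$ before letting $\epsilon\to 0$, so that the double limit is legitimate.
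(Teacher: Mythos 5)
Your proposal is correct and follows essentially the same route as the paper: the upper bound via the blanket decomposition, Theorem~\ref{theorem_compute_psi}, the bound $\beta_{m+1}(P,Q^{\mathrm{B}})\le T/(m+1)$ with $T=\alpha M$, and the binomial collapse; and the matching limit via setting all decoys to a (near-)maximizer $x^{*}$ and invoking Theorem~\ref{theorem:asymptotic} (the paper handles non-attainment of the sup by the same approximating-sequence device you describe, relegated to a footnote). The only quibble is your aside that the upper bound ``incidentally'' gives $\mathsf{Adv}_n^{\times}\to M$ --- an upper bound tending to $M$ does not by itself yield convergence --- but this is harmless since your liminf argument supplies the missing direction exactly as the paper does.
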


The proof is provided in Appendix~\ref{appendix_3}. The proof of the upper bound follows a similar approach to that of Theorem~\ref{theorem_clone}. The tightness of this bound can be justified using results from the basic setting. Here, $M$ denotes the supremum of the likelihood ratios between $\mathcal{R}(x_1)$ and all other $\mathcal{R}(x)$ for $x \in \mathbb{X}$. Roughly speaking, we may assume that this maximum is attained at $x = x^*$. 
Now, let $x_2 = x_3 = \dots = x_n = x^*$. This particular instance of the shuffle DP setting reduces to a basic setting. By definition, we then have
\[
\mathsf{Adv}_n^{\times}(\mathcal{R}, x_1) \ge \beta_n(\mathcal{R}(x_1), \mathcal{R}(x^*)).
\]
By Theorem~\ref{theorem:asymptotic}, $\beta_n(\mathcal{R}(x_1), \mathcal{R}(x^*)) \to M$ as $n \to \infty$, and the result follows from the squeeze theorem.

On the one hand, Theorem~\ref{theorem_mba} implies Theorem~\ref{theorem_clone} because the differential privacy guarantee of $\mathcal{R}$ ensures that
\[
\sup_{y} \frac{\Pr[\mathcal{R}(x_1) = y]}{\inf_{x} \Pr[\mathcal{R}(x) = y]} \le e^{\varepsilon}.
\]
This also demonstrates that the bound in Theorem~\ref{theorem_clone} is tight in the worst case. On the other hand, the bound obtained from the \emph{blanket decomposition} is input-dependent and can yield sharper bounds for specific values of $x_1$. We illustrate this with the following example.

\begin{exmp}
Consider the Laplace mechanism defined over $[0,1]$:
\[
\mathcal{R}(x) = x + \mathrm{Lap}\left(\frac{1}{\varepsilon}\right), \quad x \in [0,1],
\]
where $\mathrm{Lap}(\delta)$ denotes the Laplace distribution with density function $\frac{1}{2\delta} e^{-|x|/\delta}$. A direct computation shows that the \emph{blanket distribution} of $\mathcal{R}$ is $\mathcal{R}(0.5)$, and the associated mixture coefficient is $\alpha = e^{-\varepsilon/2}$.

According to Theorem~\ref{theorem_mba}, we have the bound
\[
\mathsf{Adv}_n^{\times}(\mathcal{R}, x_1) \le \sup_{y} \frac{\Pr[\mathcal{R}(x_1) = y]}{\inf_{x} \Pr[\mathcal{R}(x) = y]} = e^{(|x_1 - 0.5| + 0.5)\varepsilon}.
\]
In particular, when $x_1 = 0.5$, the bound given by the blanket decomposition yields $\mathsf{Adv}_n^{\times}(\mathcal{R}, x_1) \le e^{\varepsilon/2}$, which is strictly tighter than the clone decomposition bound $\mathsf{Adv}_n^{\times}(\mathcal{R}, x_1) \le e^{\varepsilon}$.
\end{exmp}

In Theorems~\ref{therm_optimal} and \ref{theorem_mba}, the notation \(\mathcal{R}(x_i)\) is used. Here, we explain how these results naturally extend to the general case where \(x_i \sim V_i\). Following the discussion in Section~\ref{sec_definition_attack_dp}, it suffices to consider \(\mathcal{R}^*(\mathrm{V}_i)\). In fact, the blanket distribution of \(\mathcal{R}^*\) coincides with that of \(\mathcal{R}\), because all possible distributions of \(\mathcal{R}^*\) are convex combinations of \(\mathcal{R}\), so its maximal common component remains unchanged:
$$\forall y: \inf_{\mathrm{V}\in \mathcal{D}[\mathbb{X}]} \Pr[\mathcal{R}^*(\mathrm{V})=y]=\inf_{x\in \mathbb{X}}\Pr [\mathcal{R}(x)=y].$$
Therefore, in the case where \(x_i \sim V_i\), the blanket reduction remains optimal, with the only modification being to interpret \(\mathcal{R}(x_1)\) as \(\mathcal{R}^*(\mathrm{V}_1)\).
The interpretation for Theorem~\ref{theorem_mba} is analogous.

\begin{exmp}
Consider $\mathcal{R}$ as a $\ln(3)$-DP 2-RR mechanism. For $x_1\sim V_1=(p,1-p)$, we have 
\begin{align*}
&\mathcal{R}(x_1) \sim (0.25+0.5p,0.75-0.5p),\\
&\forall y\in \{1,2\},\ \inf_{x \in \mathbb{X}} \Pr[\mathcal{R}(x) = y]=0.25,\\
&M=\sup_{y} \frac{\Pr[\mathcal{R}(x_1) = y]}{\inf_{x \in \mathbb{X}} \Pr[\mathcal{R}(x) = y]} = \max \{1+2p,3-2p\}.
\end{align*}
According to Theorem~\ref{theorem_mba}, when $p=1$, we have $\mathsf{Adv}_n^{\times}(\mathcal{R}, x_1)\le3=e^{\epsilon}$. When $p=0.5$, $\mathsf{Adv}_n^{\times}(\mathcal{R}, x_1)\le 2 < e^{\epsilon}$.
\end{exmp}

\section{Conclusion}

In this paper, we presented the first systematic information-theoretic analysis of re-identification attacks in the shuffle model. We introduced a fundamental formulation, where one message is drawn from a distribution $P$ and the remaining $n-1$ messages from a distribution $Q$, and all are anonymized via shuffling. We derived an exact analytical expression for the success probability $\beta_n(P, Q)$ of the optimal Bayesian adversary, as well as its asymptotic behavior. We further established tight mutual bounds between the Bayesian advantage and the total variation distance $\Delta(P, Q)$, highlighting their equivalence up to a $1/n$ factor.

Extending beyond this basic setting, we analyzed re-identification attacks under general shuffle differential privacy protocols. We proposed a reduction framework that transforms the shuffle DP setting into a canonical instance, enabling analysis via known quantities $\beta_n(P, Q)$. Leveraging decomposition-based techniques, we demonstrated how the \emph{clone} and \emph{blanket} decompositions can be used to upper bound the adversary's success probability. Notably, we showed that the blanket decomposition not only yields the optimal privacy amplification bounds in prior literature but also provides the optimal upper bound for Bayesian re-identification attacks among all decomposition-based methods.

Our results establish new theoretical foundations for understanding anonymity leakage in the shuffle model and offer new insights into the security analysis of honeyword systems, shuffle DP mechanisms, and beyond. We hope this framework can inspire further research.

\bibliographystyle{IEEEtran}
\bibliography{bayes}

\appendix
\subsection{Proof of Theorem \ref{therm3}}\label{appendix_1}

\begin{proof}
Consider the case where $m - 1$ of the entries in $\{y_i\}_{i=2}^n$ have the same likelihood ratio as $y_1$. For the adversary to succeed, all $m$ of these entries—including $y_1$—must simultaneously attain the maximum likelihood ratio among all $n$ inputs. Since these $m$ entries are indistinguishable under the likelihood ranking, the attacker can do no better than guessing uniformly among them. Therefore, the success probability in this scenario is $1/m$.

We compute the total success probability:
\begin{align*}
&\beta_n(P, Q) \\
&= f(+\infty) + \sum_t \sum_{m=1}^n \frac{1}{m} f(t) \binom{n-1}{m-1} G^{n-m}(t^-) g^{m-1}(t) \\
&= f(+\infty) + \sum_t \sum_{m=1}^n \frac{1}{n} f(t) \binom{n}{m} G^{n-m}(t^-) g^{m-1}(t) \\
&= f(+\infty) + \sum_t \sum_{m=1}^n \frac{1}{n} \cdot t \cdot g(t) \binom{n}{m} G^{n-m}(t^-) g^{m-1}(t) \\
&= f(+\infty) + \sum_t \frac{1}{n} \cdot t \left[ \left( G(t^-) + g(t) \right)^n - G^n(t^-) \right] \\
&= f(+\infty) + \sum_t \frac{1}{n} \cdot t \left[ G^n(t) - G^n(t^-) \right],
\end{align*}
where we use the identity $\frac{1}{m} \binom{n-1}{m-1} = \frac{1}{n} \binom{n}{m}$, the binomial expansion, and the identity $f(t) = t \cdot g(t)$ (Property~\ref{prop0}).
\end{proof}

\subsection{Proof of Theorem \ref{crypto-game}}\label{appendix_2}
\begin{proof}

\textbf{Lower bound.}
We construct an explicit adversary $\mathcal{A}^*$: it samples a random index $r \in [n]$. If $P(y_r) \ge Q(y_r)$, the adversary outputs $r$; otherwise, it outputs a uniformly random index from $[n] \setminus \{r\}$.

Let $\sigma$ denote the shuffling permutation. If $\sigma^{-1}(1) = r$, then $y_r \sim P$, and the attack succeeds when $P(y_r) > Q(y_r)$, which occurs with probability $\sum_{x: P(x) > Q(x)} P(x)$. Otherwise, when $\sigma^{-1}(1) \ne r$, we have $y_r \sim Q$, and the adversary wins with probability $\frac{1}{n-1}$ if $P(y_r) < Q(y_r)$, which occurs with probability $\sum_{x: P(x) < Q(x)} Q(x)$.

Thus, the total success probability is:
\begin{align*}
&\beta_n(P, Q)\\
&\ge \mathbf{GuessGame}^{P,Q,n}_{\mathcal{A}^*}(1)\\
&= \frac{1}{n} \sum_{x: P(x) > Q(x)} P(x) + \frac{n-1}{n} \cdot \frac{1}{n-1} \sum_{x: P(x) < Q(x)} Q(x) \\
&= \frac{1}{n} \left[ \sum_{x: P(x) > Q(x)} P(x) + \sum_{x: P(x) < Q(x)} Q(x) \right] \\
&= \frac{1}{n} \sum_x \left( \frac{P(x) + Q(x)}{2} + \frac{|P(x) - Q(x)|}{2} \right) \\
&= \frac{1}{n} \left( 1 + \Delta(P, Q) \right),
\end{align*}
which implies:
\[
\mathsf{Adv}_n^+(P, Q) = \beta_n(P, Q) - \frac{1}{n} \ge \frac{1}{n} \cdot \Delta(P, Q).
\]

\medskip
\noindent\textbf{Upper bound.}
We first introduce two supporting lemmas.

\begin{lemma}\label{lemma:intG}
Let $P$ and $Q$ be two probability distributions, and let $G(t)$ denote the cumulative distribution function of the likelihood ratio $\frac{P(x)}{Q(x)}$ with respect to $x \sim Q$. Define $f(+\infty) := \sum_{x : Q(x) = 0} P(x)$.
For $ M \ge \sup\limits_{x : Q(x) > 0} \frac{P(x)}{Q(x)}$, we have
\[
\int_0^M G(t) \, \mathrm{d}t = M - 1 + f(+\infty).
\]
\end{lemma}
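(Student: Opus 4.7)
The plan is to recognize $G$ as the cumulative distribution function of the likelihood–ratio random variable $L(y) := P(y)/Q(y)$ under $y \sim Q$, and then invoke the standard tail-integral identity to convert the integral $\int_0^M G(t)\,\mathrm{d}t$ into an expectation that evaluates in closed form.

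First I would observe that, because $Q$ assigns zero mass to the set $\{y : Q(y) = 0\}$, the random variable $L(y)$ is almost surely well-defined under $y \sim Q$, takes values in $[0, M]$ by the hypothesis $M \ge \sup_{y : Q(y) > 0} P(y)/Q(y)$, and therefore satisfies $G(M) = 1$ and $G(0^-) = 0$. Next I would apply integration by parts (or equivalently the layer-cake/tail-expectation formula for non-negative random variables) to write
\[
\mathbb{E}_{y \sim Q}[L(y)] \;=\; \int_0^M t \,\mathrm{d}G(t) \;=\; \bigl[t G(t)\bigr]_0^M - \int_0^M G(t)\,\mathrm{d}t \;=\; M - \int_0^M G(t)\,\mathrm{d}t.
\]
Rearranging gives $\int_0^M G(t)\,\mathrm{d}t = M - \mathbb{E}_{y \sim Q}[L(y)]$.

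It then remains to compute $\mathbb{E}_{y \sim Q}[L(y)]$ directly. Since $Q(y) \cdot \frac{P(y)}{Q(y)} = P(y)$ on the support of $Q$, the expectation telescopes to
\[
\mathbb{E}_{y \sim Q}[L(y)] \;=\; \sum_{y : Q(y) > 0} Q(y) \cdot \frac{P(y)}{Q(y)} \;=\; \sum_{y : Q(y) > 0} P(y) \;=\; 1 - \!\!\sum_{y : Q(y) = 0}\!\! P(y) \;=\; 1 - f(+\infty),
\]
where the last equality is just the definition of $f(+\infty)$. Substituting back yields exactly $\int_0^M G(t)\,\mathrm{d}t = M - 1 + f(+\infty)$.

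The main obstacle, and really the only delicate point, is making sure the CDF machinery is applied correctly on the boundary of the support of $Q$: one has to note that the mass $f(+\infty)$ attributed to $\{Q = 0\}$ under $P$ does not appear in $G$ at all (since $G$ is a distribution under $Q$, not $P$), and this is precisely why the expectation $\mathbb{E}_Q[L]$ equals $1 - f(+\infty)$ rather than $1$. Once this subtlety is pinned down, the rest is a one-line integration by parts.
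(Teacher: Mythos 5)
Your proof is correct and follows essentially the same route as the paper: integration by parts on $\int_0^M G(t)\,\mathrm{d}t$, followed by the evaluation $\int_0^M t\,\mathrm{d}G(t)=\mathbb{E}_{y\sim Q}[P(y)/Q(y)]=1-f(+\infty)$. The paper phrases this last step through the density identity $f(t)=t\,g(t)$ (its Property~\ref{prop0}) and the total mass of $f$, but that is the same computation you carry out directly as an expectation under $Q$.
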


\begin{proof}
We apply integration by parts and use Property~\ref{prop0}, which states that $f(t) = t \cdot g(t)$, where $g$ is the density of $G$. Note that:
\begin{align*}
\int_0^M G(t) \, \mathrm{d}t 
&= t G(t) \big|_0^M - \int_0^M t \, \mathrm{d}G(t) \\
&= M \cdot G(M) - \int_0^M t \cdot g(t) \, \mathrm{d}t \\
&= M - \int_0^M f(t) \, \mathrm{d}t.
\end{align*}
Since $P$ is a probability distribution, the total mass of $f$ is:
\[
\int_0^M f(t) \, \mathrm{d}t = \sum_{x : Q(x) \ne 0} P(x) = 1 - f(+\infty),
\]
because $f(+\infty) = \sum_{x : Q(x) = 0} P(x)$. Substituting this in yields:
\[
\int_0^M G(t) \, \mathrm{d}t = M - (1 - f(+\infty)) = M - 1 + f(+\infty).
\]
\end{proof}

\begin{lemma}\label{lemma:tv_integral}
Let $P$ and $Q$ be two probability distributions. Then the total variation distance between $P$ and $Q$ can be expressed as:
\[
\Delta(P, Q) = \int_0^1 G(t) \, \mathrm{d}t,
\]
where $G(t)$ is defined as in Lemma~\ref{lemma:intG}.
\end{lemma}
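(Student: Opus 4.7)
The plan is to interchange the order of integration in $\int_0^1 G(t)\,dt$ via Fubini--Tonelli, evaluate the resulting per-point integral in closed form, and then recognize the answer as the standard one-sided expression for the total variation distance.

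First, I would unfold the definition of $G$ and swap the sum over $y$ with the integral over $t$, which is justified because the integrand is nonnegative:
\[
\int_0^1 G(t)\,dt \;=\; \int_0^1 \sum_{y:\,Q(y)>0} Q(y)\,\mathbb{1}\{P(y)/Q(y)\le t\}\,dt \;=\; \sum_{y:\,Q(y)>0} Q(y)\int_0^1 \mathbb{1}\{P(y)/Q(y)\le t\}\,dt.
\]
Using the elementary identity $\int_0^1 \mathbb{1}\{r\le t\}\,dt = (1-r)^{+}$ for $r\ge 0$, the inner integral evaluates to $(1 - P(y)/Q(y))^{+}$, and multiplying through by $Q(y)$ gives $Q(y)\cdot(1-P(y)/Q(y))^{+} = (Q(y)-P(y))^{+}$. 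Hence
\[
\int_0^1 G(t)\,dt \;=\; \sum_{y:\,Q(y) > P(y)} \bigl(Q(y)-P(y)\bigr).
\]

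The second step is to identify this one-sided sum with $\Delta(P,Q)$. Since $\sum_y P(y) = \sum_y Q(y) = 1$, the positive and negative parts of $P-Q$ have equal total mass, so
\[
\sum_{y:\,Q(y)>P(y)}\bigl(Q(y)-P(y)\bigr) \;=\; \tfrac{1}{2}\sum_{y}\bigl|P(y)-Q(y)\bigr| \;=\; \Delta(P,Q),
\]
which completes the proof.

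There is essentially no serious obstacle: the only points of care are (i) justifying the Fubini interchange (handled by nonnegativity) and (ii) noting that $y$'s with $Q(y)=0$ do not contribute on the left (they do not appear in $G$) and also do not contribute on the right (for such $y$ one has $Q(y) < P(y)$, not $Q(y) > P(y)$). Alternatively, one could invoke Lemma~\ref{lemma:intG} and write $\int_0^1 G(t)\,dt = (M-1+f(+\infty)) - \int_1^M G(t)\,dt$, then apply Fubini to $\int_1^M (1-G(t))\,dt = \sum_{y:\,P(y)>Q(y),\,Q(y)>0}(P(y)-Q(y))$ and combine with $f(+\infty) = \sum_{y:\,Q(y)=0}P(y)$ to recover the dual one-sided form $\sum_{y:\,P(y)>Q(y)}(P(y)-Q(y)) = \Delta(P,Q)$; the direct Fubini route above is the cleaner one.
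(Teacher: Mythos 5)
Your proof is correct, and it takes a genuinely different route from the paper's. The paper works in the ``likelihood-ratio density'' picture: it writes $\Delta(P,Q)=\int_0^1[g(t)-f(t)]\,\mathrm{d}t$, invokes Property~\ref{prop0} ($f(t)=t\,g(t)$) to get $\int_0^1(1-t)g(t)\,\mathrm{d}t$, and then integrates by parts, asserting that the boundary term $(1-t)G(t)$ vanishes at both endpoints. Your argument instead unfolds $G$ and applies Tonelli, reducing everything to the elementary identity $\int_0^1\mathbb{1}\{r\le t\}\,\mathrm{d}t=(1-r)^+$ and the standard one-sided formula for total variation. What your route buys is rigor and self-containment in the discrete setting: it sidesteps the informal density notation $f,g$ (which are really atomic measures here) and, more substantively, the boundary term at $t=0$, where $G(0)=\Pr_{y\sim Q}[P(y)=0]$ need not be zero --- the paper's vanishing claim is only valid under a left-limit convention for the Stieltjes integration by parts, a subtlety your proof never encounters. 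What the paper's route buys is consistency with the machinery used elsewhere (Lemma~\ref{lemma:intG}, Theorem~\ref{theorem2} are all phrased through $f$, $g$, and $G$), so the two lemmas read as parallel computations. Your handling of the $Q(y)=0$ terms on both sides is also correct and worth keeping explicit.
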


\begin{proof}
We begin by writing:
\[
\Delta(P, Q) = \sum_{x : P(x) < Q(x)} [Q(x) - P(x)] = \sum_{t \le 1} \left[ Q(S_t) - P(S_t) \right],
\]
where $S_t := \{x \in \mathbb{X} \mid \frac{P(x)}{Q(x)} = t \}$.

Using the definition of the densities $f$ and $g$ over the likelihood ratio $t$, we can write:
\[
\Delta(P, Q) = \int_0^1 [g(t) - f(t)] \, \mathrm{d}t.
\]

By Property~\ref{prop0}, we know that $f(t) = t \cdot g(t)$, so:
\[
g(t) - f(t) = (1 - t) \cdot g(t).
\]

Thus:
\begin{align*}
\Delta(P, Q) &= \int_0^1 (1 - t) \cdot g(t) \, \mathrm{d}t \\
&= (1 - t) G(t) \big|_0^1 + \int_0^1 G(t) \, \mathrm{d}t \\
&= \int_0^1 G(t) \, \mathrm{d}t.
\end{align*}
In the last step, we used the fact that $(1 - t) G(t)$ vanishes at both $t = 0$ and $t = 1$.
\end{proof}

We begin with the expression from Theorem~\ref{theorem2}:
\[
\mathsf{Adv}_n^+(P, Q) = \frac{1}{n} \left( M - \int_0^M G^n(t) \, \mathrm{d}t \right) + f(+\infty) - \frac{1}{n}.
\]
This expression holds for any $M \ge \sup\limits_{x : Q(x) > 0} \frac{P(x)}{Q(x)}$. For the derivation that follows, we require $M > 1$. To satisfy this condition, we define
$M = \max\left\{2025,\ \sup\limits_{x : Q(x) > 0} \frac{P(x)}{Q(x)}\right\}$.

By Lemma~\ref{lemma:intG} and Lemma~\ref{lemma:tv_integral}, we have:
\[
\int_0^M G(t) \, \mathrm{d}t = M - 1 + f(+\infty), \quad \int_0^1 G(t) \, \mathrm{d}t = \Delta(P, Q).
\]
Subtracting these gives:
\[
\int_1^M G(t) \, \mathrm{d}t = M - 1 + f(+\infty) - \Delta(P, Q).
\]

To upper bound $\mathsf{Adv}_n^+(P, Q)$, we lower bound $\int_0^M G^n(t) \, \mathrm{d}t$. Observe:
\[
\int_0^M G^n(t) \, \mathrm{d}t \ge \int_1^M G^n(t) \, \mathrm{d}t.
\]

Applying Jensen’s inequality to the convex function $x^n$:
\begin{align*}
\int_1^M G^n(t) \, \mathrm{d}t 
&\ge (M - 1) \left( \frac{1}{M - 1} \int_1^M G(t) \, \mathrm{d}t \right)^n \\
&= (M - 1) \left( 1 + \frac{f(+\infty) - \Delta(P, Q)}{M - 1} \right)^n.
\end{align*}

Substituting into the formula for $\mathsf{Adv}_n^+(P, Q)$:
\begin{align*}
&\mathsf{Adv}_n^+(P, Q)\\
&\le \frac{1}{n} \left( M - 1 - \int_1^M G^n(t) \, \mathrm{d}t \right) + f(+\infty) \\
&\le \frac{1}{n} (M - 1) \left[ 1 - \left( 1 + \frac{f(+\infty) - \Delta(P, Q)}{M - 1} \right)^n \right] + f(+\infty).
\end{align*}

We now apply Bernoulli’s inequality:
\[
(1 + x)^n \ge 1 + nx, \quad \text{for } x \ge -1.
\]
Since $f(+\infty) - \Delta(P, Q)+M-1=\int_1^M G(t)\mathrm{d}t \ge0$, the value of $x := \frac{f(+\infty) - \Delta(P, Q)}{M - 1} \ge -1$, and so:
\[
\left( 1 + \frac{f(+\infty) - \Delta(P, Q)}{M - 1} \right)^n \ge 1 + n \cdot \frac{f(+\infty) - \Delta(P, Q)}{M - 1}.
\]

Substituting this into the bound:
\begin{align*}
&\mathsf{Adv}_n^+(P, Q)\\
&\le \frac{1}{n}(M - 1) \left[ 1 - \left(1 + n \cdot \frac{f(+\infty) - \Delta(P, Q)}{M - 1} \right) \right] + f(+\infty) \\
&= - (f(+\infty) - \Delta(P, Q)) + f(+\infty) = \Delta(P, Q). 
\end{align*}

\end{proof}
\subsection{Proof of Theorem \ref{theorem_mba}}\label{appendix_3}
\begin{proof}
Let \( T := \sup_{y} \frac{\mathcal{R}(x_1)(y)}{Q^{\mathrm{B}}(y)} \). By Theorem~\ref{theorem_compute_psi}, we have:
\begin{align*}
\psi(\mathcal{R},& n, \alpha, Q^{\mathrm{B}}, x_1)\\
&= \sum_{m=0}^{n-1} \binom{n-1}{m} \alpha^m (1 - \alpha)^{n-1 - m} \cdot \beta_{m+1}(\mathcal{R}(x_1), Q^{\mathrm{B}}) \\
&\le \sum_{m=0}^{n-1} \binom{n-1}{m} \alpha^m (1 - \alpha)^{n-1 - m} \cdot \frac{T}{m+1} \\
&= \frac{T}{\alpha n} \sum_{m=0}^{n-1} \binom{n}{m+1} \alpha^{m+1} (1 - \alpha)^{n-1 - m} \\
&= \frac{T}{\alpha n} \left[ 1 - (1 - \alpha)^n \right] \\
&\le \frac{T}{\alpha n}.
\end{align*}

Now observe that:
\[
\frac{T}{\alpha} = \sup_y \frac{\mathcal{R}(x_1)(y)}{\alpha Q^{\mathrm{B}}(y)} 
= \sup_y \frac{\Pr[\mathcal{R}(x_1) = y]}{\inf_x \Pr[\mathcal{R}(x) = y]}=M,
\]
which implies the upper bound:
\[
\mathsf{Adv}_n^{\times}(\mathcal{R},x_1) \le M.
\]

For the lower bound, let \( x^* \in \mathbb{X} \) be such that\footnote{If the $\arg\max$ does not exist, then there exists a sequence $(x^*_m)$ such that $$\lim_{m\to \infty}\sup_y \frac{\Pr[\mathcal{R}(x_1) = y]}{\Pr[\mathcal{R}(x^*_m) = y]}=M.$$ in which case the argument follows by considering $\mathsf{Adv}_n^{\times}(\mathcal{R}(x_1),\mathcal{R}(x^*_m))$.}:
\[
x^* = \arg\max_{x\in\mathbb{X}} \sup_y \frac{\Pr[\mathcal{R}(x_1) = y]}{\Pr[\mathcal{R}(x) = y]}.
\]
Now consider the input vector where \( x_2 = \cdots = x_n = x^* \). This reduces the setting to the basic comparison case, so:
\[
\mathsf{Adv}_n^{\times}(\mathcal{R},x_1) 
\ge \mathsf{Adv}_n^{\times}(\mathcal{R}(x_1), \mathcal{R}(x^*)).
\]
By Theorem~\ref{theorem:asymptotic}, we have:
\begin{align*}
\mathsf{Adv}_n^{\times}(\mathcal{R}(x_1), \mathcal{R}(x^*)) \to 
\sup_y \frac{\Pr[\mathcal{R}(x_1) = y]}{\Pr[\mathcal{R}(x^*) = y]} =M.
\end{align*}
Therefore, by the squeeze theorem, we conclude that
\[
\lim_{n \to \infty} \mathsf{Adv}_n^{\times}(\mathcal{R},x_1) = M.
\]
This completes the proof.
\end{proof}

\end{document}